\numberwithin{equation}{section}
\newcommand{\nc}{\newcommand}
\nc{\parent}[1]{$[\![#1]\!]$}
\newtheorem{theorem}{Theorem}[section]
\newtheorem{lemma}{Lemma}[section]
\newtheorem{example}{Example}[section]
\newtheorem{remark}{Remark}[section]
\newtheorem{definition}{Definition}[section]
\newtheorem{assumption}{Assumption}[section]
\newenvironment{pf-main}{{\sc Proof of Theorem \ref{mainresult}.}\hspace{3mm}}{\qed}
\DeclareMathOperator{\sgn}{sgn}
\nc{\cadlag}{c\`{a}dl\`{a}g } \nc{\caglad}{c\`{a}gl\`{a}d }
\nc{\ba}{\begin{array}} \nc{\ea}{\end{array}}
\nc{\be}{\begin{equation}} \nc{\ee}{\end{equation}}
\nc{\bea}{\begin{eqnarray}} \nc{\eea}{\end{eqnarray}}
\nc{\bean}{\begin{eqnarray*}} \nc{\eean}{\end{eqnarray*}} \nc{\wtilde}{\widetilde}
\nc{\bu}{\bullet} \nc{\nn}{\nonumber} \nc{\cA}{{\mathcal A}}
\nc{\cB}{{\mathcal B}} \nc{\cC}{{\mathcal C}}  \nc{\cD}{{\mathcal
		D}}\nc{\sD}{{\mathscr
		D}} \nc{\sE}{{\mathscr
		E}} \nc{\sB}{{\mathscr
		B}}\nc{\cL}{{\mathcal L}} \nc{\cN}{{\mathcal
		N}}\nc{\bbD}{\mathbb{D}} \nc{\cG}{{\mathcal G}} \nc{\cE}{{\mathcal E}}\nc{\cT}{{\mathcal T}} \nc{\cF}{{\mathcal
		F}} \nc{\cS}{{\mathcal S}} \nc{\cR}{{\mathcal
		R}}\nc{\cU}{{\mathcal U}} \nc{\cH}{{\mathcal H}} \nc{\bfQ}{\mathbf{Q}}
\nc{\cK}{{\mathcal K}} \nc{\cM}{{\mathcal M}} \nc{\cP}{{\mathcal
		P}} \nc{\bbE}{\mathbb{E}} \nc{\bbH}{\mathbb{H}} \nc{\bbF}{\mathbb{F}} \nc{\bbG}{\mathbb{G}}\nc{\bbEQ}{\mathbb{E}_{\mathbb{Q}}}
\nc{\eps}{\varepsilon} \nc{\bfE}{\mathbf{E}}\nc{\bbN}{\mathbb{N}}
\nc{\bbEP}{\mathbb{E}_{\mathbb{P}}}\nc{\bbL}{\mathbb{L}}
\nc{\bbP}{\mathbb{P}} \nc{\bbQ}{\mathbb{Q}} \nc{\Om}{\Omega} \nc{\bbW}{\mathbb{W}}
\nc{\om}{\omega} \nc{\bbR}{\mathbb{R}} \nc{\bbC}{\mathbb{C}}
\nc{\bfr}{\begin{flushright}} \nc{\efr}{\end{flushright}}
\nc{\dXt}{\Delta X_{t}} \nc{\dXs}{\Delta X_{s}}
\nc{\bs}{\blacksquare} \nc{\dX}{\Delta X} \nc{\dY}{\Delta Y}
\nc{\dnkx}{\left(X(T^{n}_{k})-X(T^{n}_{k-1})\right)}
\nc{\dom}{depth-of-the-market } \nc{\uar}{\uparrow}
\nc{\dar}{\downarrow}\nc{\rar}{\rightarrow}
\nc{\half}{\frac{1}{2}}
 \nc{\hbE}{\hat{\bbE}}
\nc{\what}{\widehat} \nc{\fhat}{\what{f}}\nc {\parx}{\frac{\partial}{\partial x}} \nc{\parw}{\frac{\partial}{\partial w}} \nc{\parww}{\frac{\partial^2}{\partial w^2}} \nc{\bfT}{\mathbf{T}}
\def\rar{\rightarrow}
\def\dar{\downarrow}
\nc{\esssup}{\mathrm{ess}\mbox{ }\mathrm{sup}}
\nc{\chf}{\mbox{$\mathbf1$}}
\nc{\red}{\color{red}}
\begin{document}

\title{On pricing rules and optimal strategies in general Kyle-Back models}
\author[]{Umut \c{C}etin}
\address{Department of Statistics, London School of Economics and Political Science, 10 Houghton st, London, WC2A 2AE, UK}
\email{u.cetin@lse.ac.uk}
\author[]{Albina Danilova}
\address{Department of Mathematics, London School of Economics and Political Science, 10 Houghton st, London, WC2A 2AE, UK}
\email{a.danilova@lse.ac.uk}
\date{\today}
\begin{abstract}
The folk result in Kyle-Back models states that the value function of the insider remains unchanged when her admissible strategies are restricted to absolutely continuous ones.  In this paper we show that,  for a large class of pricing rules used in current literature, the value function of the insider can be finite when her strategies are restricted to be absolutely continuous and infinite when this restriction is not imposed. This implies that the folk result doesn't hold for  those pricing rules and that they are not consistent with equilibrium.  We derive the necessary conditions for a pricing rule to be consistent with equilibrium and prove that, when a pricing rule satisfies these necessary conditions, the insider's optimal strategy is absolutely continuous, thus obtaining the classical result in a more general setting.  

This, furthermore, allows us to justify the standard assumption of absolute continuity of insider's strategies since one can construct a pricing rule satisfying the necessary conditions derived in the paper  that yield the same price process as the pricing rules employed in the modern literature when insider's strategies are absolutely continuous.

\end{abstract}
\maketitle

\section{Introduction} 
The canonical model of markets with asymmetric information is due to
Kyle \cite{Kyle}, where he studies a market for a single risky asset whose price is determined in equilibrium. Kyle set up the  model  in discrete time and conjectured continuous-time equilibrium by considering the limit.  The continuous-time framework  was formalised by Back \cite{Back} and thus the model is commonly referred to as a Kyle-Back model in the subsequent literature. In this type of models there are typically three types of agents participating in the market: non-strategic noise traders, a strategic risk-neutral informed trader (insider) with private information regarding the future value of the
asset, and a number of risk-neutral market makers competing for the total demand.   The goal of market makers is to set the {\em pricing rule} so that the resulting price is rational, which in particular entails finite expected profit for the insider trading at these prices.  On the other hand, the objective of the insider is to maximise her expected final wealth given the pricing rule set by the market makers. Thus, this type of modelling can be viewed as a game with asymmetric information between the market makers and the insider and the goal is to find an equilibrium of this game.

Apart from extending the Kyle's model to continuous time the most important contribution of Back \cite{Back} was to establish that, when the market maker sets the price to be a harmonic function of total order, the insider's value function is finite and the optimal control solving the insider's optimisation problem is absolutely continuous. This implies that the set of admissible controls of the insider can be reduced to absolutely continuous ones. This restriction significantly simplifies the problem of finding an equilibrium since it allows one to employ a PDE approach to the insider's optimal control problem that yields a system of PDEs that the value function of the insider and the pricing rule of the market maker have to satisfy in equilibrium.

Back's result was the original justification for restricting the set of admissible controls of the insider to absolutely continuous ones and this restriction is now standard in the asymmetric information literature (see, e.g., \cite{Back-Baruch}, \cite{Cho}, \cite{CDF}, \cite{CCD}, \cite{CRH}, \cite{CFNO}, \cite{CNF}, and \cite{MSZ}). In this paper we show that if we extend the class of pricing rules beyond harmonic functions of total order to include ones used in the recent literature, e.g. in the papers cited above, then the value function of the insider is infinite and her optimal control is not absolutely continuous. In particular, this is true for the pricing rules in the aforementioned papers. Since the value function of the insider is infinite, those pricing rules can not be equilibrium pricing rules.

However, since the infinite profit is due to penalty imposed on discontinuous strategies or strategies with additional martingale part being insufficient to offset the profit made due to private information, one can modify this penalty to ensure optimality of absolutely continuous strategies, while warranting the same price process when insider's strategy is absolutely continuous.  This is precisely what we do in this paper by establishing a class of pricing rules that yield the same price process as the models cited before when the trading strategy of the insider is absolutely continuous but produce a finite value for the insider when her strategies are allowed to have jumps or martingale parts. We show that for this class of pricing rules the set of admissible controls of the insider can be reduced to absolutely continuous ones. 

To the best of our knowledge, this paper is the first one to identify this class of pricing rules consistent with an equilibrium. Moreover, it is also the first one since \cite{Back} that justifies the restriction of insider's controls to absolutely continuous ones in a general setting. Thus it closes the gap between the assumption of absolutely continuous controls and its justification in the modern literature that employs more general pricing rules.

The paper is structured as follows. In Section 2 we describe the model and introduce the set of pricing rules that generalise the pricing rules employed in the current literature. {In Section 4 we state the main results of the paper:  Theorem \ref{t:jmprice} that derives the necessary conditions on the pricing rule that ensure that the insider cannot achieve infinite profits by employing discontinuous strategies and/or strategies with a martingale part, Theorem \ref{t:gzero} that establishes a PDE condition on the pricing rule that is necessary for the existence of equilibrium, and Theorem \ref{mm:t:AC} which demonstrate that, under the conditions for the pricing rule derived Theorems  \ref{t:jmprice}  and \ref{t:gzero}, the restriction of admissible controls to absolutely continuous ones produces the same value function. In Section 4 provide a worked-out example that illustrates how the techniques developed in this paper can be applied to a particular model. In Section 5 we analyse the optimisation problem of the insider and establish a subset of these pricing rules that yield a finite value to this problem. Moreover, as a by-product we obtain the familiar sufficient conditions on the pricing rule and the trading strategy in order for the equilibrium to exist.}

\section{Model setup}
As in \cite{Back} we will assume that the trading will take place over the time interval $[0,1]$. Let $(\Omega , \cG , (\cG_t)_{t \in [0,1]} , \bbQ)$ be a filtered probability space satisfying the usual conditions, The time-1 value of the traded asset is given by $f(Z_1)$, which will become public knowledge at $t=1$ to all market participants, where $Z$ is a continuous and adapted process, and $f$ is a  measurable increasing function.

Three types of agents  trade in the market. They differ in their information sets and objectives as follows.

\begin{itemize}
	\item \textit{Noise/liquidity traders} trade for liquidity reasons, and
	their total demand at time $t$ is given by a standard $(\cG_t)$-Brownian motion $
	B$ independent of $Z$.
	\item \textit{Market makers} only  observe   the total demand
	\[
	Y=\theta+B, 
	\]
	where $\theta$ is the demand process of the informed trader. The admissibility condition imposed later on $\theta$ will entail in particular that $Y$ is a semimartingale. 
	
	They set the price of the risky asset via a {\em Bertrand competition} and clear the market. Similar to \cite{Back-Baruch} we assume that the market makers set the price as a
	function of weighted total order process at time $t$, i.e.{ we assume that the price process, $S$, is given by
        \begin{equation} \label{mm:e:rule_mm}
 	S_t = H\left(t, X_t\right), \qquad \forall t\in [0,1)
	\end{equation}}
	where $X$ is adapted to the filtration generated by $B$ and $Z$ and is the unique strong solution of a certain SDE whose coefficients and drivers are constructed by the market makers as made precise in Definition \ref{mm:d:prule}.  Moreover, a pricing rule  has to be admissible in the sense of Definition \ref{mm:d:prule}, which will entail $S$ being a semimartingale.  
	\item \textit{The informed investor} observes the price process $
	S_{t}=H\left(t, X_t\right)$ and her private signal $Z$.
	Since she is
	risk-neutral, her objective is to maximize the expected final
	wealth, i.e.
	\bea
	&&\sup_{\theta \in \mathcal{A}}E^{0,z}\left[ W_{1}^{\theta
	}\right], \mbox{ where} \label{ins_obj}
	\\
	W_{1}^{\theta
	}&=&
	(f(Z_{1})-S_{1-})\theta _{1-}+\int_{0}^{1-}\theta _{s-}dS_{s}. \label{mm:eq:insW}
	\eea
	In above $\mathcal{A}$ is the set of admissible trading strategies
	for the given pricing rule\footnote{Note that this implies  the insider's
		optimal trading strategy takes into account the \emph{feedback
			effect}, i.e. that prices react to her trading strategy.}, which will be defined in Definition \ref{mm:d:iadm}. Moreover, $E^{0,z}$ is the expectation with respect to $P^{0,z}$, which is the regular conditional distribution of  $(X_s, Z_s; s\leq 1)$ given $X_0=0$ and $Z_0=z$, which exists due to Theorem 44.3 in \cite{Bauer}.
	
	Thus, the insider maximises the
	expected value of her final wealth
	$W_{1}^{\theta }$, where the first term on the right hand side of equation (%
	\ref{ins_obj}) is the contribution to the final wealth due to a potential
	differential between  the market price and the fundamental value at the time of information
	release, and the second term is the contribution to the final wealth coming from
	the trading activity.
\end{itemize}

Given the above market structure, we can now precisely define the filtrations of the market makers and of the informed trader.  As we shall consider their right continuous augmentations, we first define the probability measures that will be used in the augmentation. 

First  define $\cF:=\sigma(B_t,Z_t; t \leq 1)$ and let $Q^{0,z}$ be the regular conditional distribution of $(B,Z)$ given $B_0=0$ and $Z_0=z$.  Observe that any $P^{0,z}$-null set is also $Q^{0,z}$-null in view of the assumption on $X$.   Due to the measurability of regular conditional distributions one can define  the probability measure $\bbP$ on $(\Om, \cF)$  by
\be \label{mm:d:bbP}
\bbP(E)=\int_{\bbR} Q^{0,z}(E) \bbQ(Z_0\in dz),
\ee
for any $E \in \cF$.  

While $Q^{0,z}$ is how the informed trader assign likelihood to the events generated by $B$ and $Z$, $\bbP$ is the probability distribution of the market makers who do not observe $Z_0$ exactly. Thus, the market makers' filtration, denoted by $\cF^M$, will be the right-continuous augmentation with the $\bbP$-null sets of the filtration generated by $Y$. In particular $\cF^M$ satisfies the usual conditions. 

On the other hand, since the informed trader knows the value of $Z_0$ perfectly, it is plausible to assume that her filtration is augmented with the $Q^{0,z}$-null sets. However, this will make the modelling cumbersome since the filtration will have an extra dependence on the value of $Z_0$ purely for technical reasons.  Another natural choice is to consider the null sets that belong to every $Q^{0,z}$, i.e. the sets that are elements of the following
\be \label{mm:e:nullI}
\cN^I:=\{E\subset \cF: Q^{0,z}(E)=0, \, \forall z\in \bbR\}.
\ee
These null sets will correspond to the  {\em a priori} beliefs that the informed trader has about the model before she is given the private information about $Z_0$ and, thus, can be used as a good benchmark for comparison. Therefore we assume that the informed trader's filtration, denoted by $\cF^I$, is the right continuous augmentation\footnote{See  Section 3 of \cite{GTM} for a recipe of the procedure.} of the filtration generated by $S$ and $Z$  with the sets of $\cN^I$. Similarly, we will denote by $\cF^{B,Z}$ the is the right continuous augmentation of the filtration generated by $B$ and $Z$  with the sets of $\cN^I$. Note that the resulting filtrations are {\em not} complete.

A rational
expectations equilibrium is a pair consisting of
an \emph{admissible} pricing rule and an \emph{admissible}
trading strategy such that: \textit{a)} given the pricing rule
the trading strategy is optimal, \textit{b)} given the trading
strategy,  the pricing rule is {\em rational} in the following sense:
\be \label{mm:d:mm_obj}
H(t,X_t)=S_t=\mathbb{E}\left[   f(Z_1)|\mathcal{F}_t^M\right],
\ee
where $\bbE$ corresponds to the expectation operator under $\bbP$.
To formalize
this definition of equilibrium, we first  define the sets of admissible
pricing rules and trading strategies.

\begin{definition}\label{mm:d:prule} An {\em admissible
		pricing rule} is any quadruple $(H,w,c,j)$ fulfilling the following
	conditions:
	\begin{enumerate}
		\item $w : [0,1]\times \bbR \to (0,\infty)$ is a function in $C^{1,2}([0,1] \times \bbR)$;
		\item \label{mm:d:noempty} Given a Brownian motion, $\beta$, on some filtered probability space, there exists a unique strong solution to 
		\[
		d\tilde{X}_t=w(t,\tilde{X}_t)d\beta_t, \qquad \tilde{X}_0=0.
		\]
		\item $H \in C^{1,2}([0,1) \times \bbR)$.
		\item $x \mapsto H (t,x)$ is strictly increasing for every $t\in [0,1)$;
		\item $c:[0,1]\times \bbR \to \bbR$ is locally Lipschitz;
		\item  $j:[0,1]\times \bbR \times\bbR \to \bbR$ is continuous and there exists $\Delta^*>0$ such that 
		\be \label{mm:e:jgrowth}
		|j(t,x,\Delta)|\leq \Gamma(t,x) |\Delta|, \qquad \mbox{ for } |\Delta| <\Delta^*,
		\ee
		where $\Gamma$ is locally bounded.
	\end{enumerate}
\end{definition}
\begin{definition} \label{mm:d:iadm}
	An $\cF^{B,Z}$-adapted \footnote{ See Remark \ref{mm:r:insfilt} for the explanation of the choice of filtration $\theta$ is adapted to.}  $\theta$ is said to be an  admissible trading
	strategy for a  given admissible pricing rule $(H,w,c,j)$  if the following conditions are stisfied.
	\begin{enumerate} \item $\theta$ is  a semimartingale\footnote{Note that due to the incompleteness of the stochastic basis we follow the notion of semimartingale from Jacod and Shiryaev \cite{JS} that only requires the right-continuity of filtrations.} with summable jumps on $(\Om, \cF, (\cF^{B,Z}_t), Q^{0,z})$ for each $z \in \bbR$.
		\item There exists a unique strong solution\footnote{Following Kurtz \cite{Kurtz07} $X$ is a strong solution of (\ref{mm:eq:signal_mm1}) if there exists a measurable mapping, $\varphi$, from a Polish space to a Polish space such that $X:=\varphi(Y)$ satisfies (\ref{mm:eq:signal_mm1}). In this case both Polish spaces are taken to be the space of right continuous functions  with left limits on $[0,1]$ equipped with Skorokhod topology. Note that $Y$ may jump only due to the discontinuities in $\theta$.} , $X$, to 
		\[
		\begin{split}
	           dX_t = &w (t,X_{t-}) dY^c_t+\Big(\frac{w_x(t,X_{t-})}{2}+c(t,X_{t-})\Big)w(t,X_{t-})(d[Y,Y]^c_t-dt)\\
	                   &+ K_w^{-1}(t,j(t,X_{t-},\Delta Y_t)+K_w(t,X_{t-})+\Delta Y_t)-X_{t-}
	        \end{split}
	        \]
	starting from $X_0=0$ over the time interval $[0,1]$ on $(\Om, \cF, (\cF^{B,Z}_t), \bbP)$, where $Y
		=B+\theta$, and 
	\begin{equation}
	K_w(t,x)=\int_0^x\frac{1}{w(t,y)}dy + \half \int_0^t w_x(s,0)ds. \label{mm:e:Kwdef}
        \end{equation}
		\item No doubling strategies are
		allowed, i.e. for all $z \in \bbR$
		\begin{equation}
		E^{0,z}\left[ \int_{0}^{1}H^{2}\left(t,X_t\right)dt\right] <\infty.
		\label{mm:e:theta_cond_2}
		\end{equation}
		The set of admissible trading strategies for the  given pricing rule $(H,w,c,j)$ is denoted with $\mathcal{A}(H,w,c,j)$. For the notational brevity, we will also denote by $\mathcal{A}(H,w):= \mathcal{A}(H,w,0,0)$.
	\end{enumerate}
\end{definition} 

 Now we are in a position to define  the price set by the market makers. Observe that, given  an admissible pricing rule  $(H,w,c,j)$ and a trading strategy $\theta\in \mathcal{A}(H,w,c,j)$, there exists (due to the admissibility of $\theta$) unique strong solution of
	\begin{equation}\label{mm:eq:signal_mm1}
	dX_t = w (t,X_{t-}) dY^c_t+dC_t+ J_t ,\quad X_0 =0,
	\end{equation}
	with $Y=B+\theta$, $K_w$ defined in (\ref{mm:e:Kwdef}) and
	\bea
	dC_t&=&\Big(\frac{w_x(t,X_{t-})}{2}+c(t,X_{t-})\Big)w(t,X_{t-})(d[Y,Y]^c_t-dt), \label{mm:e:Cdef} \\
	J_t&=&K_w^{-1}(t,j(t,X_{t-},\Delta Y_t)+K_w(t,X_{t-})+\Delta Y_t)-X_{t-}. \label{mm:e:Jdef} 
	\eea
	Thus the process $X$ is well-defined\footnote{ See Remark \ref{r:welldefined} for details.}, and therefore 
	$$
	   S_t:=H(t,X_t),
        $$ 
        is well-defined. Moreover, $S$ is adapted to $(\cF_t^M)$ since, following Kurtz \cite{Kurtz07}, $X$ is a functional of the paths of $Y$ only and $Y$ is adapted to $(\cF^M_t)$.  { These considerations justify the definition of price stated below.}
        \begin{definition}\label{mm:d:price} Given  an admissible pricing rule  $(H,w,c,j)$ and a trading strategy $\theta\in \mathcal{A}(H,w,c,j)$ the price set by market maker is given by 
        $$
            S_t:=H(t,X_t),
        $$
        where $X$ is unique strong solution of (\ref{mm:eq:signal_mm1})
        \end{definition}
        
       Before we discuss our admissibility conditions, we will give a few examples from the literature  that illustrate the flexibility of our definition of price:
       
    \begin{example} 
    
    {In Back  \cite{Back}  the trading takes place in continuous time and the insider observes the final asset value from the start. More precisely,  the final value of an asset in this model is $f(Z_1)$, where $f$ is an increasing function and $Z_1$  is  Gaussian random variable with mean $0$ and variance $1$. 
    
    The price is defined to be:
       \be\label{eq:Back}
         S_t=H(t,Y_t).
       \ee
       In our setting, it translates to the pricing rule $(H, 1,0,0)$. Observe that $K_{w}(t,x)=x$ and therefore $J_t=\Delta Y_t$ yielding representation of (\ref{mm:eq:signal_mm1}) in the form:
       $$
          dX_t=dY^c_t+\Delta Y_t=dY_t,
       $$
       that is, $X_t=Y_t$ for all $t\in[0,1]$. In this setting, our definitions of admissibility of trading strategies and the pricing rules coincide with the ones in \cite{Back}, as well.
       
       Note that an alternative representation of the equilibrium stock price process is possible in this model. The following representation demonstrates that already in \cite{Back}  the market makers use different weights to price absolutely continuous, martingale and jump parts of the insider strategy.  
       
        To see this define $w(t,x):= H_y(t, H^{-1}(t,x))$, where the inverse is taken with respect to the space variable, $x$. Then the equilibrium pricing rule can be represented as $(\tilde{H},w,0,0)$, where $\tilde{H}(t,x)=x+H(0,0)$. Indeed, applying Ito formula for semimartingales to (\ref{eq:Back}) yields:
  
       $$
       dS_t=w(t,S_{t-})dY^c_t+\frac{1}{2}w_{x}(t,S_{t-})w(t,S_{t-})(d[Y]^c_t-dt)+H(t,Y_{t})-H(t,Y_{t-})
       $$ 
       due to the fact that the equilibrium pricing rule in \cite{Back}  satisfies
       $$
          H_t(t,y)+\frac{1}{2}H_{yy}(t,y)=0.
       $$
       This PDE also allows to conclude that $K_{w}(t,x)=H^{-1}(t,x)-H^{-1}(0,0)$, thus yielding that
       $$
          H(t,Y_{t})-H(t,Y_{t-})=K_w^{-1}(t, K_w(t,S_{t-})+\Delta Y_t)-S_{t-},
       $$
       which implies that the equilibrium pricing rule indeed can be represented as (\ref{mm:eq:signal_mm1}) with pricing rule $(\tilde{H},w,0,0)$ (and therefore $X_t=S_t-H(0,0)$).  This representation holds for any admissible trading strategy. This, in fact, provides a generalisation of equation (16) in \cite{Back} for general admissible pricing rules (and not only for equilibrium ones that are considered in Theorem 3). Observe that $H(t,Y_{t})-H(t,Y_{t-}) \neq H_y(t,Y_{t-})\Delta Y_t$. If one instead prices jumps via    $H_y(t,Y_{t-})\Delta Y_t$, the insider obtains infinite profits as we show in Theorem \ref{t:jmprice} complementing the illustration given by \cite{Back} in Footnote 6.} 
   \end{example}  
   
   {
   \begin{example} In Baruch \cite{Baruch} the trading takes place in continuous time where the risk-averse insider observes the normally distributed asset value from the start. That is, $f(Z_1)=aZ_1+b$ is Gaussian random variable with known mean and variance. Using the above notation $S_t=X_t$, where
\[
dX_t=\lambda(t)dY_t,
\]
and $\lambda$ is deterministic.

 In our setting, it translates to the pricing rule $(H, \lambda,0,0)$ with $H(t,x)=x$. Indeed, as $\lambda$ does not depends on $x$ and the evolution of $X$ in \cite{Baruch} does not depend on quadratic variation of the martingale part of $Y$ explicitly, we should have
 $$
    0=\left(\frac{\lambda_x(t,X_{t-})}{2}+c(t,X_{t-})\right)\lambda(t,X_{t-})=c(t,X_{t-})\lambda(t,X_{t-})
 $$
 yielding $c\equiv0$. Moreover, in this case $K_w(t,x)=\frac{x}{\lambda(t)}$ and therefore we should have
 $$
   \lambda(t)\Delta Y_t=\lambda(t)(j(t,X_{t-},\Delta Y_t)+\Delta Y_t)
 $$
 yielding $j\equiv 0$.
\end{example}

     \begin{example}\label{ex:BB} Back and Baruch \cite{Back-Baruch} study a version of the Kyle model where the trading takes place in continuous time; however, the public announcement for $f(Z_1)$ takes place at an exponential time $\tau$ with rate $r$. Moreover, $f$ is the identity mapping and $Z_1$ is a Bernoulli random variable taking values in $\{0,1\}$, and $\tau$ is independent from everything else. 
     
     Back and Baruch assume that $S_t=X_t$ and
\be\label{e:X_ex_3}
dX_t=\lambda(X_s)dY_s,
\ee
for some function $\lambda$.  In this model admissible trading strategies of the insider are absolutely continuous. If we allow trading strategies with martingale part and keep the same dynamics for $S$, the price postulated in  \cite{Back-Baruch}  will translate to $\left(H, \lambda,-\frac{\lambda_x}{2}, \cdot\right)$ with $H(t,x)=x$ in this setting.\footnote{As jumps in the insider's strategies are not allowed, the $j$ term is not defined here. To allow for jumps we need to define evolution for $X$ for a process $Y$ that is not continuous.} Indeed, as the evolution of $X$ in (\ref{e:X_ex_3}) does not depend on quadratic variation of the martingale part of $Y$ explicitly, we should have
 $$
    0=\left(\frac{\lambda_x(X_{t})}{2}+c(t,X_{t})\right)\lambda(X_{t})
 $$
 yielding the stated form of $c$.
     \end{example} 
     
     \begin{example}\label{ex:CCA}  In Campi, \c{C}etin and Danilova \cite{CCD} extension of the Kyle model the insider receives a dynamic signal given by a diffusion process
\[
dZ_t=\sigma(s)a(V(s),Z_s)dW_s,
\]
where $V$ is an increasing and absolutely continuous deterministic function and $a$ is a diffusion coefficient satisfying mild regularity assumptions. The market makers choose a pricing rule $H$ and set the market price $S_t=H(t,X_t)$, where
\[
dX_t= w(t,X_t)dY_t,
\]
where $w$ is a smooth weighting function. In this model it is assumed  that admissible trading strategies of the insider are absolutely continuous. If we generalize this model to allow trading strategies with jumps and martingale part by assuming that
\be\label{e:X_ex_4}
   dX_t= w(t,X_{t-})dY_t,
\ee
this pricing rule will translate to the setting of this paper as $\left(H,w,c,j\right)$ with $c(t,x)=-\frac{w_x(t,x)}{2}$ and 
$$
   j(t,x,\Delta)=K_w(t,x+w(t,x)\Delta)-K_w(t,x)-\Delta
$$
with $K_w$ defined as in (\ref{mm:e:Kwdef}). Indeed, as the evolution of $X$ in (\ref{e:X_ex_4}) does not depend on quadratic variation of the martingale part of $Y$ explicitly, we should have
 $$
    0=\left(\frac{w_x(t,X_{t-})}{2}+c(t,X_{t-})\right)w(t,X_{t-})
 $$
 yielding the stated form of $c$. To obtain $j$, observe that we should have
 $$
   w(t,X_{t-})\Delta Y_t=K_w^{-1}(t,j(t,X_{t-},\Delta Y_t)+K_w(t,X_{t-})+\Delta Y_t)-X_{t-}.
 $$
 
\end{example}
	
	As those examples illustrate, the process $C$ in Definition \ref{mm:d:price}  prices the additional martingale part of the total demand incurred by the insider's strategy whereas $J$ penalizes the jumps in $Y$. Moreover, the Doss-Lamperti transformation $K_w$ ensures that the price of the jumps is given by the units of the total demand process and is independent of the scaling factor $w$. That is, the form of the price postulated in Definition \ref{mm:d:price} allows a general penalization of martingale and jump components of insider strategy and therefore covers, as particular cases, vast majority of pricing rules considered in the literature.  This becomes even more apparent if we rewrite (\ref{mm:eq:signal_mm1})  as (note that this  is equivalent representation  due to the conditions on $w$\footnote{ Indeed, choosing $c=\frac{\tilde{c}}{w}-\frac{w_x}{2}$ and 
	\[
	j(t,x,\Delta)=K_w(t,\tilde{j}(t,x,\Delta)+x)-K_w(t,x)-\Delta,
	\]
	gives us the representation in the form of (\ref{mm:eq:signal_mm1})-(\ref{mm:e:Jdef}). Moreover, the relevant conditions in Definition \ref{mm:d:prule} are satisfied in view of the mean value theorem and  the fact that that $w>0$. 
}):
	\[
	dX_t = w (t,X_{t-}) dY^c_t+\tilde{c} (t,X_{t-})(d[Y,Y]^c_t-dt)+ \tilde{j}(t,X_{t-},\Delta Y_t) ,\quad X_0 =0,
	\]
	with  $\tilde{j}$ satisfying (\ref{mm:e:jgrowth}) for some $\Delta^*$ and  $\tilde{c}$ being locally Lipschitz.
	
If $\theta$ is absolutely continuous, $dX_t=w(t,X_t)dY_t$ and the price set by the market makers agrees with the one set in the standard literature. That is, if the insider's trading strategy is restricted to be absolutely continuous, the market price process is the same for all choices of $c$ and $j$. This implies that  (\ref{mm:eq:signal_mm1}) defines a set of pricing rules for general strategies of the insider that are consistent with the ones used in the literature under the assumption that the insider is only allowed to follow absolutely continuous strategies. Thus, if one can identify the functions $c$ and $j$ for which the optimal strategy of the insider is absolutely continuous, one recovers the equilibria obtained in the previous studies with the modification of the pricing rule given by those $c$ and $j$.  The reason for the chosen parametrization of $C$ and $J$ is that we are going to establish that these functions are identically $0$ under our parametrization. More precisely, the insider gets infinite profit  unless $c=j=0$ as shown in Theorem \ref{t:jmprice}. \footnote{This, in particular, implies that in the Example \ref{ex:BB} and Example \ref{ex:CCA} an insider can obtain infinite profit by employing trading strategies with martingale part and/or jumps.}

  In the next few remarks we discuss well-posedness  of our definitions of admissible trading strategies and the price set by the market makers as well as their direct implications.   
}       
    \begin{remark} \label{mm:r:insfilt} Note that Condition \ref{mm:d:noempty} in Definition \ref{mm:d:prule} is equivalent to no trading being admissible for the insider. Thus, $\mathcal{A}(H,w,c,j)\neq \emptyset$ for any admissible pricing rule. 
    	
    	Moreover, the strict monotonicity of $H$ in the space variable implies $H$ is invertible prior to time $1$,
    	thus, the filtration of the insider is generated by $X$ and $Z$. Note that  jumps of $Y$ can be inferred from the jumps of $X$ via (\ref{mm:eq:signal_mm1}) and the form of $J$.  Moreover, since $K_w \in C^{1,2}$ under the hypothesis on $w$, an application of Ito's formula yields
    	\[
    	dK_w(t,X_t)=dY^c_t -\half w_x(t,X_{t-})dt +K_w(t,X_{t})-K_w(t,X_{t-})+\frac{\partial}{\partial t}K_w(t,X_{t-})dt.
    	\] 
    	Thus, one can also obtain the dynamics of $Y^c$ by observing $X$. Hence,  the natural filtrations of $X$ and $Y$ coincide. This in turn implies that $(\cF^{S,Z}_t)=(\cF^{B,Z}_t)$, i.e. the insider has full information about the market. This justifies our choice of the filtration to which $\theta$ is adapted in Definition \ref{mm:d:iadm}.
    \end{remark}    
        
\begin{remark} \label{r:welldefined}
	Note that since $\theta$ is assumed to be admissible in Definition \ref{mm:d:price}, it is a semimartingale with summable jumps. This in turn implies $Y$ is a semimartingale with summable jumps as well. Therefore, the equations (\ref{mm:eq:signal_mm1})-(\ref{mm:e:Jdef}) are well-posed. In particular, the processes $C$ and $(\sum_{s\leq t}J_s)_{t\in [0,1]}$ are adapted to the  market makers' filtration $\cF^M$ and are of finite variation. Indeed, since $X$ is a strong solution and $Y$ is a semimartingale, they are both right continuous with left limits. Thus, their paths are bounded on compacts a.s.. This readily implies that $C$ is of finite variation as the functions $c, w$ and $w_x$ are continuous. 
	
	Moreover, due to the mean value theorem and the fact that $K_w^{-1}$ has a continuous derivative, we have
	\[
	|J_t|\leq \gamma(\omega)|j(t,X_{t-},\Delta Y_t)+\Delta Y_t|,
	\]
	 where $\gamma$ is bounded. Since there are only finitely many jumps of $Y$ that exceed $\Delta^*$, the above bound implies that $\sum_{t\leq 1}|J_t|<\infty$ in view of (\ref{mm:e:jgrowth}). 
\end{remark}
\begin{remark} In the standard models $Z$ is assumed to be a solution of an SDE driven by a Brownian motion. This entails that $\cF^{B,Z}$ is contained in a Brownian filtration. Therefore,  the jumps of $\theta$ are summable as soon as $\theta$ is assumed to be a semimartingale. Hence,  the first condition of the definition above reduces to the requirement that $\theta$ is a semimartingale. 
\end{remark}

Now we can formally define the market equilibrium as follows.
\begin{definition} \label{eqd} A couple $((H^{\ast}, w^*, c^*,j^*), \theta^{\ast})$
	is said to form an equilibrium if $(H^{\ast},w^*, c^*,j^*)$ is an admissible pricing rule,
	$\theta^{\ast} \in \cA(H^{\ast},w^*, c^*,j^*))$, and the following conditions are
	satisfied:
	\begin{enumerate}
		\item {\em Market efficiency condition:} given $\theta^{\ast}$,
		$(H^{\ast},w^*, c^*,j^*)$ is a rational pricing rule, i.e. it satisfies (\ref{mm:d:mm_obj}).
		\item {\em Insider
			optimality condition:} given $(H^{\ast},w^*, c^*,j^*)$, $\theta^{\ast}$ solves
		the insider optimization problem for all $z$:
		\[
		E^{0,z}[W^{\theta^{\ast}}_1] = \sup_{\theta \in \cA(H^{\ast},w^*, c^*,j^*)} E^{0,z} [W^{\theta}_1]<\infty.
		\]
	\end{enumerate}
\end{definition}

 We finish this section by summarizing our standing assumptions for the convenience of the reader.
 \begin{assumption} \label{mm:a:prule} \begin{enumerate}
 		\item  $Z$ is a continuous and $\cG$-adapted process\footnote{Recall that this implies that the regular conditional distribution of  $(X_s, Z_s; s\leq 1)$ given $X_0=0$ and $Z_0=z$ exists  in view of Theorem 44.3 in \cite{Bauer}.}.
 		\item  $f:\bbR\to\bbR$ is a  measurable and increasing function.
 		\item The pricing rule $(H,w,c,j)$ satisfies       
 		\be \label{mm:e:pdeh}
 		H_t(t,x)+\half w^2(t,x)H_{xx}(t,x)=0.
 		\ee
 	\end{enumerate} 
 \end{assumption}

In what follows we will use function     $g$ defined as 
\be \label{mm:eq:pdewg} 
g(t,x):= \frac{w_t(t,x)+\frac{w^2(t,x)}{2}w_{xx}(t,x)}{w^2(t,x)}
\ee  
for a given admissible pricing rule  $(H,w,c,j)$. Note that it is continuous due to the conditions on $w$.
      
\begin{remark}
A formal derivation of HJB equations associated with the insider optimisation problem in the case the insider's signal is Markovian as in \cite{CCD} and \cite{CDBook} will lead to (\ref{mm:e:pdeh}) and $g\equiv 0$.  Moreover, we will demonstrate that  $g\equiv 0$ is necessary for the equilibrium to exist\footnote{If the equilibrium is inconspicuous as in most of the literature, the stated PDE for $H$ will follow from the standard filtering theory.}. 
\end{remark}
The above setup uses the standard definition of equilibrium as in Back \cite{Back}. The difference lies in the generalisation of the set of admissible pricing rules that in particular includes the ones used in the current literature (see, e.g., \cite{Back-Baruch}, \cite{Cho}, \cite{CDF}, \cite{CCD}, \cite{CRH}, \cite{CFNO}, \cite{CNF}, and \cite{MSZ}). Moreover, the signal of the insider is not assumed to be Markovian{, i.e. our setting is less stringent than the current literature where $Z$ is assumed to be Markov.}
\section{Main results}

{ There are three main results in this paper:
\begin{itemize}
 \item Our first main result, Theorem \ref{t:jmprice}, shows that the insider achieves infinite profits unless the pricing rule penalises the jumps and martingale parts correctly, i.e. $c=j=0$. \footnote{Observe that this condition is not satisfied in the Examples \ref{ex:BB} and \ref{ex:CCA}. }
 \item Our second main result, Theorem \ref{t:gzero}, shows that in a wide range of models for a pricing rule to be consistent with an equilibrium it's weighting function must satisfy  (\ref{mm:eq:pdewg}) with $g=0$.
 \item Our third main result, Theorem \ref{mm:t:AC}, shows that, if the market maker chooses a pricing rule with correct penalization of martingale part/jumps of insider's strategy that is also consistent with an equilibrium in a sense of the Theorem \ref{t:gzero} (i.e.  $g=0$ in (\ref{mm:eq:pdewg})), then there exists a sequence of absolutely continuous insider trading strategies that maximizes insider's objective. This implies that one can restrict admissible trading strategies to absolutely continuous one without loss of generality as soon as pricing rule is chosen correctly.  
\end{itemize}

For the reader's convenience of a reader the main results, as well as remarks discussing their assumptions, are stated below.

\begin{theorem} \label{t:jmprice} Let $(H,w,c,j)$ be an admissible pricing rule such that $H$ satisfies Assumption \ref{mm:a:prule}. Consider  $g$ defined by (\ref{mm:eq:pdewg}) and assume that for every $z \in \bbR$ there exists an $x \in \bbR$ such that
	\be \label{mm:a:g}
	E^{0,z}\left[\int_0^{1-}\int_{\xi(t,f(Z_1))} ^{x(z)} (H(t,u)-f(Z_1))|g(t,u)|dudt\right]<\infty.
	\ee
	Assume further that the random variable $f(Z_1)$ is such that :
	\begin{itemize}
		\item 
		\be\label{mm:eq:sqint}
		E^{0,z}\left[f^2(Z_1)\right]+E^{0,z}\left[K_w^2(1, H^{-1}(1,f(Z_1)))\right]<\infty, \qquad \forall z \in \bbR,
		\ee
		\item $\lim_{z \rar -\infty}E^{0,z}[-f(Z_1)]=\lim_{z \rar \infty}E^{0,z}[f(Z_1)]=\infty$,
		\item and $\limsup_{z \rar -\infty}E^{0,z}[f(Z_1)\chf_{[f(Z_1)>k]}]<\infty$, $\liminf_{z \rar \infty}E^{0,z}[f(Z_1)\chf_{[f(Z_1)<k]}]>-\infty$. 
	\end{itemize}
	Then there exists a set $E$ such that $\bbQ(Z_0\in E)>0$ and for any $z\in E$ we have 
	\[
	\sup_{\theta \in \cA(H,w,c,j)}E^{0,z}[W_1^{\theta}]=\infty
	\]
	unless $c$ and $j$ are identically $0$.
\end{theorem}

\begin{remark}
	The condition (\ref{mm:a:g}) ensures that the value function of the insider is  bounded when she is restricted to use absolutely continuous strategies. This condition is always satisfied if insider has static information, i.e. if $Z_0=Z_1$. The assumptions on the random variable $f(Z_1)$ are quite general and are satisfied in the available literature. In particular they are satisfied in a large class of diffusion models. \end{remark}
\begin{theorem} \label{t:gzero}
	Suppose that there exists an equilibrium $((H^*,w^*),\theta^*)$, where  $H^*$  satisfies Assumption \ref{mm:a:prule}. Consider $g$ defined by (\ref{mm:eq:pdewg}) and assume $Z_0=Z_1$, $\lim_{z \rar \infty}f(z)=-\lim_{z \rar -\infty}f(z) =\infty$, and $\frac{g}{H_y}$ as a function on $[0,1]\times \bar{\bbR}$ with values in $\bar{\bbR}$ is continuous. 
	
	Assume further that there exists a set $E$ such that $\bbQ(Z_0\in E)=1$ and for all $z\in E$ there exists a continuous function $s^{f(z)}$ of finite variation such that 
	\be\label{mm:eq:max_g}
	s^{f(z)}(t):=\arg \max_x -\int_{\xi(t,f(z))}^x (H^*(t,y)-f(z)) g(t,y)dy=\arg \min_x \int_{\xi(t,f(z))}^x (H^*(t,y)-f(z)) g(t,y)dy
	\ee
	for all $t\in[0,\nu]$. Then $g\equiv 0$ for all $t\in[0,\nu)$.
\end{theorem}
\begin{theorem} \label{mm:t:AC} Suppose that 
	\be \label{mm:e:fsqint}
	E^{0,z}\left(f^2(Z_1)\right)<\infty, \, \forall z \in \bbR,
	\ee
	i.e. $f(Z_1)$ is square integrable for any initial condition of $Z$. Let  $(H,w)$ be an admissible pricing rule satisfying (\ref{mm:eq:pdewg}) and (\ref{mm:e:pdeh})  with $g=0$.
	Then  $\theta \in \cA(H,w)$ is an optimal strategy if 
	\begin{enumerate}
		\item[i)] $\theta$ is continuous and of finite variation,
		\item[ii)]   and $H(1-,X_{1-})=f(Z_1), \, P^{0,z}$-a.s.,
	\end{enumerate} 
	where 
	\[
	X_t= \int_0^t w(s,X_s)\{dB_s +d\theta_s\}.
	\]
	
	Moreover, if we further assume that 
	\be \label{mm:e:Hextra}
	E^{0,z}\left[K_w^2(1, H^{-1}(1,f(Z_1)))\right]<\infty, \qquad \forall z \in \bbR, 
	\ee
	and $M$ defined by 
	\be \label{e:defM}
	M_t:=E^{0,z}\left[K_w(1,H^{-1}(1,f(Z_1)))|\cF^Z_t\right]
	\ee
	satisfies 
	\be \label{e:MQV}
	d[M,M]_t=\tilde{\sigma}^2_t dt
	\ee 
	for some measurable process $\tilde{\sigma}$ such that 
	\be \label{e:MQVbd}
	\limsup_{t \rar 1} \tilde{\sigma}^2_t (1-t)^{\alpha-1}=0
	\ee
	for some $\alpha \in (1,2)$, 
	then for any  $\theta \in \cA(H,w)$, there exists a sequence of admissible {\em absolutely continuous} strategies, $(\theta^n)_{n \geq 1}$, such that
	\[
	E^{0,z}\left[W_1^{\theta}\right]\leq \lim_{ n \rar \infty} E^{0,z}\left[W_1^{\theta^n}\right].
	\]
\end{theorem}
\begin{remark}
	To understand the conditions (\ref{e:defM})-(\ref{e:MQVbd}) note that  $K_w(1,H^{-1}(1,f(Z_1)))$ is $P^{0,z}$-integrable, therefore $M$ is well-defined and is independent of $B$. 
	
	This process $M$ will be used by the insider to drive the market price to its fundamental value. Under the optimality conditions of the theorem above $K_w(1,H^{-1}(1,f(Z_1)))=Y_1$. Thus, $M$ corresponds to the insider's expectation of the final total demand using her own private information only. Not using public information ensures $M$ is independent of $B$. 
	
	The condition (\ref{e:MQVbd}) is in fact  an assumption on the quadratic variation of the signal and is satisfied in the Markovian framework employed in the earlier Kyle-Back models (see, among others, \cite{Danilova}, \cite{BP}, \cite{CCD},\cite{Wu}, and \cite{CCD2}).
	
\end{remark}
}
\section{Penalization in action}
{ In this section we illustrate our results with the model considered in Back and Baruch \cite{Back-Baruch} and  discussed in the Example \ref{ex:BB}. We will show that the pricing rule employed therein allows infinite profit for the insider  that uses trading strategies with martingale component and describe the correct penalization  of the martingale component that prevents such opportunities and ensures that the absolutely continuous strategies are optimal.\footnote{ One can also show that insider that uses jumps achieves infinite profit in straightforward generalization of the pricing rule of \cite{Back-Baruch} where $dX_t=\lambda(X_{t-})dY_t$. It is also possible to determine a correct penalization that makes jumps suboptimal for the insider.  To do so one needs to employ tedious calculations as in the proof of Theorem \ref{t:thetacts}. Thus, to simplify the exposition we will only demonstrate the results for martingale parts assuming continuous strategies.}

 Although this model does not directly fit to the above framework at a first sight, the exponential character of $\tau$ allows us to view the setting of Back and Baruch (and a more general version studied in \c{C}etin \cite{CRH}) as an infinite horizon version of the Kyle model. 

More precisely, given a continuous semimartingale trading strategy $\theta$ of the insider, the associated final wealth at time $\tau$ is given by
\[
W_{\tau}=Z_1\theta_{\tau} -\int_0^{\tau} S_td\theta_t -[\theta,S]_t.
\]
Back and Baruch assume that $S_t=X_t$, where
\[
dX_t=\lambda(X_s)dY_s,
\]
for some function $\lambda$ that is implicitly defined by an equation and  with the boundary conditions $\lambda(0)=\lambda(1)=0$. \c{C}etin \cite{CRH} later shows that 
\[
\lambda(x)=s'(s^{-1}(x)),
\]
where 
\[
s(x)=\int_{-\infty}^x \sqrt{\frac{r}{\pi}}\exp(-ry^2)dy.
\]
Now suppose that $\theta$ has a local martingale component driven only by $B$ and let
\[
\gamma_t:=\frac{d}{dt}[\theta,B]_t.
\]
Thus, using the independence of $\tau$, we can find the expected profit of the insider given $Z_1=z$ as 
\be \label{e:bbwealth}
E^{0,z}\left[W_{\tau}\right]=E^{0,z}\left[\int_0^{\infty}e^{-rt} (z-X_t)d\theta_t -\int_0^{\infty} e^{-rt} \lambda(X_t) (1+\gamma_t)\gamma_tdt\right].
\ee
Following \cite{CRH} define
\[
\Psi(x)=\int_z^x\frac{y-z}{\lambda(y)}dy
\]
and observe that
\be\label{e:ex_psi}
\frac{\lambda^2}{2}\Psi''-r\Psi=0,
\ee
and that 
\be\label{e:ex_psi_second}
\Psi''(x)=\frac{1}{\lambda(x)}+2r \frac{x-z}{\lambda^2(x)}s^{-1}(x).
\ee
Therefore,
\[
e^{-rt}\Psi(X_t)=\Psi(x)+\int_0^t e^{-rs}\Psi'(X_s)dX_s+\half\int_0^t e^{-rs} \lambda^2(X_s)\gamma_s(2+\gamma_s)\Psi''(X_s)ds,
\]
which in turn yields
\[
\begin{split}
E^{0,z}\left[W_{\tau}\right]=&\Psi(x)-\lim_{t\rar \infty}E^{0,z}\left[e^{-rt}\Psi(X_t)\right] \\
&+E^{0,z}\left[\int_0^{\infty} e^{-rs} \left(rs^{-1}(X_s)(X_s-z)\gamma_s(2+\gamma_s)-\lambda(X_s)\frac{\gamma_s^2}{2}\right)ds\right].
\end{split}
\]
We shall now show that the insider will obtain infinite profits given this pricing rule by appropriately choosing $\gamma$. To this end it suffices to consider the case $z=0$.

Note that
\[
\gamma \mapsto rs^{-1}(X_s)X_s\gamma(2+\gamma)-\lambda(X_s)\frac{\gamma^2}{2}
\]
is strictly convex if $2r s^{-1}(X_s)X_s-\lambda(X_s)>0$. Since $s^{-1}(1)=\infty$ and $\lambda(x)=s'(s^{-1}(x)$, it is easy to show the existence of some $p^*$ such that $2r s^{-1}(x)x-\lambda(x)>1$ for $x>p^*$. Also observe that $\Psi(1-\eps) <\infty$ for $\eps>0$. Thus, the insider's optimal strategy is to keep the price process between $p^*$ and $1-\eps$  by using a constant $\gamma$ process provided $rs^{-1}(X_s)X_s\gamma(2+\gamma)-\lambda(X_s)\frac{\gamma^2}{2}>1$ all the time. This can be achieved by using a recurrent transformation with $h(x)= (s(x)-s(p^*))(s(1-\eps)-s(x))$ (cf. Theorem 3.1 in \cite{rtr}) or an infinite horizon version of the method used in Stage 3 of the proof of Theorem \ref{t:jmprice} in this paper. Scaling this $\gamma$ will lead to infinite profits in the limit. 
To penalize correctly the martingale components of a given continuous strategy the market makers should choose $X$ such that
\be\label{e:ex_X}
dX_t= \lambda(X_{t})dY_t+\frac{\lambda'(X_{t})\lambda(X_{t})}{2}(d[Y,Y]_t-dt).
\ee
This choice above will lead the insider to use continuous strategies with no martingale component. Indeed, in this case
\[
\begin{split}
   e^{-rt}\Psi(X_t)=&\Psi(x)-\int_0^t re^{-rs}\Psi(X_s)ds+\int_0^t e^{-rs}\Psi'(X_s)dX_s+\half\int_0^t e^{-rs} \lambda^2(X_s)\Psi''(X_s)d[Y,Y]_s\\
   \stackrel{(\ref{e:ex_psi})}{=}&\Psi(x)+\int_0^t e^{-rs}\Psi'(X_s)dX_s+\half\int_0^t e^{-rs} \lambda^2(X_s)\Psi''(X_s)(d[Y,Y]_s-ds)\\
    \stackrel{(\ref{e:ex_X})}{=}&\Psi(x)+\int_0^t e^{-rs}\Psi'(X_s)\lambda(X_{s})dY_t\\
    &+\half\int_0^t e^{-rs}\left(\Psi'(X_s)\lambda'(X_{s})\lambda(X_{s})+ \lambda^2(X_s)\Psi''(X_s)\right)(d[Y,Y]_s-ds)\\
    \stackrel{(\ref{e:ex_psi_second})}{=}&\Psi(x)+\int_0^t e^{-rs}\Psi'(X_s)\lambda(X_{s})dY_t\\
    &+\half\int_0^t e^{-rs}\left((X_s-z)\left(\lambda'(X_{s})+2r s^{-1}(X_s)\right)+ \lambda(X_s)\right)(d[Y,Y]_s-ds)\\
   =&\Psi(x)+\int_0^t e^{-rs}(X_s-z)dY_s+\half\int_0^t e^{-rs} \lambda(X_s)(d[Y,Y]_s-ds).
\end{split}
\]
Observe also that calculations similar to the ones that lead to (\ref{e:bbwealth}) will yield
$$
E^{0,z}\left[W_{\tau}\right]=E^{0,z}\left[\int_0^{\infty}e^{-rt} (z-X_t)d\theta_t -\int_0^{\infty} e^{-rt} \lambda(X_t) \left(d[Y,Y]_t-d[B,Y]_t\right)\right].
$$
and therefore
\[
\begin{split}
E^{0,z}\left[W_{\tau}\right]=&\Psi(x)-\lim_{t\rar \infty}E^{0,z}\left[e^{-rt}\Psi(X_t)\right] \\
&+E^{0,z}\left[\half\int_0^\infty e^{-rt} \lambda(X_t)(d[Y,Y]_t-dt)-\int_0^{\infty} e^{-rt} \lambda(X_t) \left(d[Y,Y]_t-d[B,Y]_t\right)\right]\\
=&\Psi(x)-\lim_{t\rar \infty}E^{0,z}\left[e^{-rt}\Psi(X_t)\right] -\half E^{0,z}\left[\int_0^\infty e^{-rt} \lambda(X_t)d[Y-B,Y-B]_t\right].
\end{split}
\]
This shows that additional martingale component is strictly suboptimal.}

\section{On equilibrium pricing rules and optimal strategies}
 In this section we will { prove Theorems \ref{t:jmprice} -- \ref{mm:t:AC}. In particular, we will }  show that  in the representation (\ref{mm:eq:signal_mm1}) $c=j=0$ and $w$ satisfying (\ref{mm:eq:pdewg}) { with $g=0$} are necessary conditions for { a} pricing rule to be compatible with { an} equilibrium, since any other choice of market maker's weighting of the signal will result in the infinite profit for the insider  and/or make equilibrium impossible.{ We will also show that when a pricing rule satisfies those necessary conditions trading strategies of the insider can be restricted to absolutely continuous ones.}
      
Our first theorem computes the expected final wealth of the insider in our general setup. We will use this representation to solve the optimisation problem for the insider. In particular this representation will provide an upper bound on the value function when $g$ vanishes and the trading strategies are continuous. 
\begin{theorem} \label{mm:t:iWealth} Let  $(H,w,c,j)$ be an admissible pricing rule  such that $H$  satisfies Assumption \ref{mm:a:prule}.  Assume $\theta\in \mathcal{A}(H,w,c,j)$. Then
	\bean
	E^{0,z}\left[W_1^{\theta}\right]&=&E^{0,z}\left[\Psi^{f(Z_1)}(0,0)-\Psi^{f(Z_1)}(1-,X_{1-})-\half \int_0^{1-}w(t,X_{t-})H_x(t,X_{t-})d[\theta,\theta]^c_t\right.\nn \\
	&&+\int_0^{1-}\left(H(t,X_{t-})-a\right) c(t,X_{t-})(d[Y,Y]^c_t-dt) -\int_0^{1-}\int_{\xi(t,a)} ^{X_{t-}} (H(t,u)-a)g(t,u)dudt\\
	&&+\left. \sum_{0<t<1}\left\{\Psi^{f(Z_1)}(t,X_t)-\Psi^{f(Z_1)}(t,X_{t-})-(H(t,X_{t})-f(Z_1))\Delta\theta_t\right\} \right],
	\eean
	where
		\begin{equation}\label{mm:e:generalG_a}
	\Psi^a(t,x):=\int_{\xi(t,a)} ^x \frac{H(t,u)-a}{w(t,u)}du+\frac{1}{2}\int_t^1H_x(s,\xi(s,a))w(s,\xi(s,a))ds,
	\end{equation}
	the function $g$ is given by (\ref{mm:eq:pdewg}), and  $\xi(t,a)$ is the unique solution of $H(t,\xi(t,a))=a$.
	
	Moreover,
	\be \label{mm:eq:uljumps}
	\begin{split}
		\Delta \Psi^{a}(t,X_t)-(H(t,X_t)-a)\Delta\theta_t\leq (H(t,X_{t})-a)j(t,X_{t-},\Delta Y_t)\\
		\Delta \Psi^{a}(t,X_t)-(H(t,X_t)-a)\Delta\theta_t\geq (H(t,X_{t-})-a)j(t,X_{t-},\Delta Y_t)-\Delta H(t,X_t)\Delta \theta_t.
	\end{split}
	\ee
\end{theorem}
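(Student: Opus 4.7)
The plan is to apply Itô's formula to $\Psi^{f(Z_1)}(t,X_t)$ and combine with an integration-by-parts on the wealth process. Treating $a=f(Z_1)$ as a random parameter (it is independent of $B$ under $Q^{0,z}$, which is what matters), I would first apply integration by parts to $\theta_{1-}H(1-,X_{1-})$ to obtain
\begin{equation*}
W_1^\theta = \int_0^{1-}(a-H(s,X_{s-}))\,d\theta_s - [\theta,H(\cdot,X)]_{1-},
\end{equation*}
using $\theta_0=0$. Writing $d\theta = dY^c - dB + \Delta\theta$ (since $B$ is continuous), and computing the bracket via $dH^c=H_x w\,dY^c + (\text{finite variation})$ together with $d[Y,Y]^c - dt = 2\,d[\theta,B]+d[\theta,\theta]^c$, one can express $W_1^\theta$ as a sum of $\int_0^{1-}(a-H)\,dY^c_s$, a stochastic integral against $dB$, jump terms and a $\frac12\int wH_x\,d[\theta,\theta]^c$ term.

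Next I would apply Itô to $\Psi^a(t,X_t)$. The key analytical step is verifying the identity
\begin{equation*}
\Psi^a_t(t,x)+\tfrac{1}{2}w^2(t,x)\Psi^a_{xx}(t,x)=-\int_{\xi(t,a)}^x(H(t,u)-a)g(t,u)\,du.
\end{equation*}
For this I would compute $\Psi^a_x=(H-a)/w$ and $\Psi^a_{xx}=(H_xw-(H-a)w_x)/w^2$, then substitute $H_t=-\tfrac12 w^2 H_{xx}$ from \eqref{mm:e:pdeh} and $w_t=-\tfrac12 w^2 w_{xx}+w^2g$ from \eqref{mm:eq:pdewg} into $\Psi^a_t$. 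The boundary contribution from differentiating in the lower limit $\xi(t,a)$ vanishes because $H(t,\xi(t,a))=a$, and the residual $u$-integrand $-\tfrac12(wH_{xx}-(H-a)w_{xx})$ telescopes to $-\tfrac12[wH_x-(H-a)w_x]$ between $\xi(t,a)$ and $x$, with the endpoint contribution at $\xi(t,a)$ cancelling the $\tfrac12 H_x(t,\xi)w(t,\xi)$ arising from differentiating the second summand in \eqref{mm:e:generalG_a}. Applying Itô using $\Psi^a_xw = H-a$ and the dynamics of $X$ yields an expression for $\Psi^a(1-,X_{1-})-\Psi^a(0,0)$ whose continuous martingale part is precisely $\int(H-a)\,dY^c_s$, so one can substitute it back into $W_1^\theta$.

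Taking $E^{0,z}$, the stochastic integral $\int_0^{1-}(a-H(s,X_{s-}))\,dB_s$ vanishes: under $Q^{0,z}$, $B$ is a Brownian motion independent of $Z$, and after an initial enlargement by $\sigma(Z_s,s\le 1)$ it remains Brownian, so the integrand is adapted and the admissibility bound \eqref{mm:e:theta_cond_2} delivers the martingale property. After cancellations, the $\tfrac12\int wH_x(d[Y,Y]^c-dt)$ pieces coming from $\Psi^a$ and from $[\theta,H(\cdot,X)]$ cancel, leaving $-\tfrac12\int wH_x\,d[\theta,\theta]^c$; and the jump terms collapse because $\Delta H\,\Delta\theta + (H(\cdot,X_{t-})-a)\,\Delta\theta = (H(\cdot,X_t)-a)\,\Delta\theta$, producing the stated formula.

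For the jump inequalities \eqref{mm:eq:uljumps}, I would perform the change of variables $v=K_w(t,u)$ in
\begin{equation*}
\Delta\Psi^a(t,X_t)=\int_{X_{t-}}^{X_t}\frac{H(t,u)-a}{w(t,u)}\,du=\int_{K_w(t,X_{t-})}^{K_w(t,X_t)}\bigl(H(t,K_w^{-1}(t,v))-a\bigr)\,dv.
\end{equation*}
By the definition of $J$ in the excerpt, $K_w(t,X_t)-K_w(t,X_{t-})=\Delta Y_t+j(t,X_{t-},\Delta Y_t)$; since $H(t,\cdot)$ and $K_w^{-1}(t,\cdot)$ are both monotone increasing, the integrand is monotone in $v$, so the integral is sandwiched between the endpoint values times the length of the interval, whatever its sign. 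Using $\Delta Y_t=\Delta\theta_t$ and rearranging yields both inequalities. The main obstacle is the algebraic verification of the PDE identity for $\Psi^a_t+\tfrac12 w^2\Psi^a_{xx}$; the $g$-source term is exactly the contribution that tracks how far $w$ is from the harmonic condition, and getting the boundary cancellations right is where the PDEs on $w$ and $H$ are fully used.
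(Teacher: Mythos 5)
Your proposal is correct and follows essentially the same route as the paper's proof: integration by parts on the wealth, verification of the PDE $\Psi^a_t+\tfrac12 w^2\Psi^a_{xx}=-\int_{\xi(t,a)}^x(H-a)g\,du$, Itô applied to $\Psi^a(t,X_t)$ with cancellation of the $dY^c$ and quadratic-variation terms, and a monotonicity sandwich for the jump inequalities using $K_w(t,X_t)-K_w(t,X_{t-})=\Delta Y_t+j(t,X_{t-},\Delta Y_t)$. Your change of variables $v=K_w(t,u)$ in the jump estimate and your explicit enlargement argument for the martingale property of the $dB$-integral are only cosmetic elaborations of what the paper does.
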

\begin{proof}
	Using Ito's formula for general semimartingales (see, e.g. Theorem II.32 in \cite{Pro}) we obtain
	\[
	dH(t, X_t)=H_x(t,X_{t-})w(t,X_{t-})dY^c_t +dFV_t,
\]
	where $FV$ is of finite variation, in view of Remark \ref{r:welldefined}. Therefore,
	\be
	[\theta, S]^c_t =\int_0^t  H_x(s,X_{s-})w(s,X_{s-})\left\{d[B,\theta]_s+d[\theta,\theta]^c_s\right\}. \label{mm:eq:THQV}
	\ee
	Moreover, integrating (\ref{mm:eq:insW}) by parts (see Corollary 2 of Theorem  II.22 in \cite{Pro}) we get
	\be \label{mm:eq:Wibp}
	W^{\theta}_1=f(Z_1)\theta_{1-}-\int_0^{1-}H(t,X_{t-}))d\theta_t -[\theta,H(\cdot, X)]_{1-}
	\ee
	since the jumps of $\theta$ are summable. Moreover, direct calculations lead to 
	\begin{equation} \label{mm:eq:pdepsi}
	\Psi^a_t+\frac{1}{2}w(t,x)^2\Psi^a_{xx}=-\int_{\xi(t,a)} ^x (H(t,u)-a)g(t,u)du.
	\end{equation}
 Ito's formula in conjunction with above  yields
	\bean
	\Psi^a(1-,X_{1-})&=&\Psi^a(0,0)+ \int_0^{1-}H(t,X_{t-})(dB_t +d\theta_t) -a (B_1 +\theta_{1-})\\
	&&+ \half \int_0^{1-}w(t,X_{t-})H_x(t,X_{t-})(d[Y,Y]^c_t-dt) \\
	&&+ \sum_{0<t<1}\left\{\Psi^{a}(t,X_t)-\Psi^a(t,X_{t-})-(H(t,X_{t-})-a)\Delta\theta_t\right\}\\
	&&+\int_0^{1-}\left(H(t,X_{t-})-a\right) c(t,X_{t-})(d[Y,Y]^c_t-dt) -\int_0^{1-}\int_{\xi(t,a)} ^{X_{t-}} (H(t,u)-a)g(t,u)dudt
	\eean
	Combining the above and (\ref{mm:eq:Wibp}) and noting that the stochastic integral with respect to $B$ is a true martingale we deduce
	\bean
	E^{0,z}\left[W_1^{\theta}\right]&=&E^{0,z}\left[\Psi^{f(Z_1)}(0,0)-\Psi^{f(Z_1)}(1-,X_{1-})-\half \int_0^{1-}w(t,X_{t-})H_x(t,X_{t-})d[\theta,\theta]^c_t\right.\nn \\
	&&+\int_0^{1-}\left(H(t,X_{t-})-a\right) c(t,X_{t-})(d[Y,Y]^c_t-dt) -\int_0^{1-}\int_{\xi(t,a)} ^{X_{t-}} (H(t,u)-a)g(t,u)dudt\\
	&&+\left. \sum_{0<t<1}\left\{\Psi^{f(Z_1)}(t,X_t)-\Psi^{f(Z_1)}(t,X_{t-})-(H(t,X_{t})-f(Z_1))\Delta\theta_t\right\} \right].
	\eean
	Note that since $w$ is positive  and $H$ is increasing, we have
	\bean
	\Psi^{a}(t,X_t)-\Psi^{a}(t,X_{t-})-(H(t,X_t)-a)\Delta\theta_t&=&\int_{X_{t-}}^{X_t}\frac{H(t,u)-a}{w(t,u)}du -(H(t,X_t)-a)\Delta\theta_t \\
	&\leq&(H(t,X_t)-a)\int_{X_{t-}}^{X_t}\frac{1}{w(t,u)}du -(H(t,X_t)-a)\Delta\theta_t \\
	&=&(H(t,X_t)-a)j(t,X_{t-},\Delta Y_t).
	\eean
	Similarly,
	\[
	\Psi^{a}(t,X_t)-\Psi^{a}(t,X_{t-})-(H(t,X_t)-a)\Delta\theta_t\geq (H(t,X_{t-})-a)j(t,X_{t-},\Delta Y_t)-\Delta H(t,X_t)\Delta \theta_t.
	\]
\end{proof}
\begin{remark}
	\label{r:absfinval} The representation of the expected profit given by the above theorem shows that the absolutely continuous strategies deliver expected wealth bounded by $E^{0,z}[\Psi^{f(Z_1)}(0,0)]$ when $g\equiv 0$. Similarly, if the  optimisation problem 
	\[
	\inf_X E^{0,z}\left[\int_0^{1-}\int_{\xi(t,f(Z_1))} ^{X_t} (H(t,u)-f(Z_1))|g(t,u)|dudt\right]
	\]
	has a finite value, the value function of the insider is also bounded when she is restricted to use absolutely continuous strategies. Theorem \ref{t:jmprice} that we will prove next imposes conditions  sufficient for this to hold.
\end{remark}

\begin{proof}[Proof of Theorem \ref{t:jmprice}]
	Suppose $c(t,x)\neq 0$ for some $t<1$ and $x$. Since $c$ is continuous, there exist $\nu_1<\nu_2<1$ and $x_1<x_2$ such that $|c(t,x)|>\eps$ for some $\eps>0$ on $[\nu_1,\nu_2]\times[x_1,x_2]$. Moreover, by the continuity of $K_w^{-1}$ there exists $t_1<t_2<1$ and $y_1<y_2$ such that $K_w^{-1}(t,y) \in [x_1,x_2]$ for  all $(t,y)\in [t_1,t_2]\times [y_1,y_2]$. 
	
	We shall construct a continuous trading strategy to achieve arbitrarily large profits for some realisation of $Z_1$. This construction will be done in three stages.  The first stage will utilise Lemma \ref{mm:l:appr} to bring $X$ inside $[K_w^{-1}(t_1,y_1),[K_w^{-1}(t_1,y_2)]$ at time $t_1$.  The second stage will keep $K_w(t,X_t)$ inside the interval $[y_1,y_2]$ with arbitrarily large quadratic variation. The final stage will keep $X$ bounded  up to time $1$. 
	
	Observe that for any continuous semimartingale $\theta$ and $G(t,x):=\int_0^{x}g(t,y)dy$
	\[
	dK_w(t,X_t)=dY_t+ c(t,X_t)(d[Y,Y]_t-dt)-G(t,X_t)dt.
	\] 

\noindent
{\bf Stage 1:} To obtain a bounded $X^{\eps}$ satisfying $K_w(t_1,X^{\eps}_{t_1})\in (y_1,y_2)$ apply Lemma \ref{mm:l:appr} to 
\[
x(t)=\frac{(K_w^{-1}(t_1,y_1)+K_w^{-1}(t_1,y_2))t}{2 t_1} \mbox{ and } \eps=\frac{(K_w^{-1}(t_1,y_2)-K_w^{-1}(t_1,y_1))}{4}.
\]
Set $X=X^{\eps}$ on $[0,t_1]$.

\noindent
{\bf Stage 2:}  Fix a $y \in (y_1,y_2)$.  Consider interval $[t_1, t_2]$ and  the solutions of
\[
dR_t=(b+1)dB_t +(b+1)^2 \left(\frac{1}{R_t-y_1}\chf_{[R_t \leq y]}-\frac{1}{y_2-R_t}\chf_{[R_t>y]}\right)dt+ (b^2+2b)c(t, K_w^{-1}(t,R_t))dt.
\]
and observe that pathwise uniqueness holds  until the exit time from $(y_1,y_2)$ since $c$ is locally Lipschitz and $K_w^{-1}$ is continuously differentiable.  Thus, if we can show the existence of a weak solution that never exits $(y_1,y_2)$, we will arrive at a strong solution that stays in $(y_1,y_2)$. Indeed,  since $c(t,K_w^{-1}(t,x))$ is bounded for all $(t,x) \in (t_1,t_2)\times (y_1,y_2)$, by means of Girsanov transformation, weak solutions of above are the same as those of
\be \label{mm:eq:neverexit}
dU_t=qd\beta_t + q^2\left(\frac{1}{U_t-y_1}\chf_{[U_t \leq y]}-\frac{1}{y_2-U_t}\chf_{[U_t>y]}\right)dt,
\ee
which are unique in law and never exit $(y_1,y_2)$ by Proposition 3.1 in \cite{rtr}. Define $X_t:=K_w^{-1}(t,R_t)$ and observe that 
\[
d\theta_t=b dB_t +  (b+1)^2 \left(\frac{1}{R_t-y_1}\chf_{[R_t \leq y]}-\frac{1}{y_2-R_t}\chf_{[R_t>y]}\right)dt + G(t,K_w^{-1}(t,R_t))dt.
\]

\noindent
{\bf Stage 3:} Finally consider the interval $(t_2,1]$. Apply Lemma \ref{mm:l:appr} to $x(t)=X_{t_2}$ and $\eps$ as before to get $|X^{\eps}|$  and set $X=X^{\eps}$. 

Observe that $X$ constructed above  is bounded by a determinstic constant, which in turn implies the boundedness of $H(t,X_t)$. Thus, $\theta \in \cA(H,w,c,j)$. Recall from Theorem \ref{mm:t:iWealth} that

\bean
E^{0,z}\left[W_1^{\theta}\right]&\geq&E^{0,z}\left[-\Psi^{f(Z_1)}(1-,X_{1-})-\frac{b^2}{2}\int_{t_1}^{t_2}w(t,X_{t-})H_x(t,X_{t-})dt\right.\nn \\
&&+\left.(b^2+2b)\int_{t_1}^{t_2}\left(H(t,X_{t-})-f(Z_1)\right) c(t,X_{t-})dt -\int_0^{1-}\int_{\xi(t,f(Z_1))} ^{X_{t-}} (H(t,u)-f(Z_1))g(t,u)dudt\right]
\eean
 since  $\Psi^{f(Z_1)}(0,0)\geq 0$ and $d[\theta,\theta]_t=0$ for $t \in [0,1]\backslash [t_1,t_2]$. Moreover, as $dK_w(1,u)=\frac{1}{w(1,u)}$,
\[
\Psi^{f(Z_1)}(1-,X_{1-})\leq (H(1,X_{1-})-f(Z_1))(K_w(1,X_{1-})-K_w(1,\xi(1,f(Z_1))))\nn
\]
since $H$ and $K_w$ are increasing functions. Since $X$ is bounded and (\ref{mm:eq:sqint}) holds, we deduce 
\be \label{mm:eq:psibound}
E^{0,z}[\Psi^{f(Z_1)}(1-,X_{1-})]<\ell_1(z)<\infty. 
\ee
Also observe that
\bean
\int_0^{1-}\int_{\xi(t,f(Z_1))} ^{X_{t-}} (H(t,u)-f(Z_1))|g(t,u)|dudt&=&\int_0^{1-}\int_{\xi(t,f(Z_1))} ^{x(z)} (H(t,u)-f(Z_1))|g(t,u)|dudt\\
&&+\int_0^{1-}\int_{x(z)} ^{X_{t-}} (H(t,u)-f(Z_1))|g(t,u)|dudt.
\eean
Thus, for some constant $\ell_2(z)$ independent of $b$ due to Assumption \ref{mm:a:g} and $X$ taking  values in a bounded interval, 
\be \label{mm:e:gintegrable}
E^{0,z}\left[\int_0^{1-}\int_{\xi(t,f(Z_1))} ^{X_{t-}} (H(t,u)-f(Z_1))|g(t,u)|dudt\right]\leq \ell_2(z) <\infty.
\ee

Therefore,
\bean
E^{0,z}\left[W_1^{\theta}\right]&\geq& \ell(z) +E^{0,z}\left[\int_{t_1}^{t_2}\left((b^2+2b)\left(H(t,X_{t-})-f(Z_1)\right) c(t,X_{t})-\frac{b^2}{2}w(t,X_{t-})H_x(t,X_{t-})\right)dt\right]\\
&\geq& \ell(z) + b^2(m_1+m E^{0,z}[f(Z_1)]+(M-m)E^{0,z}[f(Z_1)\chf_{[f(Z_1)<0]}]) \\
&&+ b(m_3+2m E^{0,z}[f(Z_1)]+2(M-m)E^{0,z}[f(Z_1)\chf_{[f(Z_1)<0]}])\\
&=&  \ell(z) + b^2(m_1+M E^{0,z}[f(Z_1)]+(m-M)E^{0,z}[f(Z_1)\chf_{[f(Z_1)\geq 0]}]) \\
&&+ b(m_3+2M E^{0,z}[f(Z_1)]+2(m-M)E^{0,z}[f(Z_1)\chf_{[f(Z_1)\geq 0]}]),
\eean
where the constants $m$ and $M$  such that $mM>0$ and $M\geq -\int_{t_1}^{t_2}c(t,X_t)dt\geq m$ exist due to the continuity of $c$, boundedness of $X$ and that $c(t,X_t)$ is bounded away from $0$ on $[t_1,t_2]$ by construction.

Observe that  the coefficient of $b^2$ in above  can be made positive for large enough $z$ (resp. small enough $z$) if $m>0$ (resp. if $M<0$)  due to our assumption on the random variable $f(Z_1)$. This implies that insider's wealth can be made arbitrarily large for such $z$ by making $b$ arbitrarily large. This yields the claim that $c$ must be $0$ for insider's profit to be finite.

Next, suppose $c\equiv0$, but  $j(t,x,\kappa)\neq 0$ for some $t<1$ and $x,\kappa$.  Without loss of generality, assume $j(t,x,\kappa)> 0$ (the proof in the case $j(t,x,\kappa)<0$ is similar). Since $j$ is continuous, there exist $t_1<t_2<1$, $x_1<x_2$, and $\kappa_1<\kappa_2$ such that $j(t,x,\kappa)>\delta$ for some $\delta>0$ on $[t_1,t_2]\times[x_1,x_2]\times[\kappa_1,\kappa_2]$. 

We will construct a strategy that achieves an arbitrarily large profit for some realisations of $Z_1$. This will be again done in three stages: first, we will bring $X$ inside the interval $[x_1,x_2]$ at time $t_1$ via Lemma \ref{mm:l:appr}. On the interval $[t_1,t_2]$ we will construct a process with an arbitrary number of jumps each of which will give positive contribution to the final utility. Finally, we will keep $X$  in the interval $[x_1,x_2]$ after time $t_2$.

Fix $0<\eps< \frac{x_2-x_1}{2}$ 
\begin{itemize}
\item[Stage 1: ] On the interval $[0,t_1)$  let $x(t)=\frac{x_1+x_2}{2t_1}t$ and apply Lemma \ref{mm:l:appr} with $\eps$ as above to obtain a bounded process $X$ such that $X_{t_1 -}\in (x_1,x_2)$. The associated $\theta^{\eps}$ will be used as insider's strategy on $[0,t_1)$. 
\item[Stage 2: ] Next, we iteratively construct the process of jumps on $[t_1,t_2]$. 

To this end, consider $s_i=t_1+i\frac{t_2-t_1}{n}$, for $i=0,n$ and set $\theta _{t_1}=\theta _{s_0}=\theta _{s_0-}+\kappa_1$. Observe that $j(s_0,X_{s_0-}, \kappa_1)>\delta$, and $X_{s_0}= K_w^{-1}(s_0,j(s_0,X_{s_0-}, \kappa_1)+K_w(s_0,X_{s_0-})+\kappa_1)$. 
      
      Suppose we already constructed the process $\theta$ (and $X$) on $[0,s_i]$ and $i<n$, then  on the interval $(s_i, s_{i+1})$ consider $x(t)=X_{s_i}+\frac{\frac{x_1+x_2}{2}-X_{s_i}}{s_{i+1}-s_i}(t-s_i)$ and apply Lemma \ref{mm:l:appr} with $\eps$ as above. Similar to Stage 1 the associated $\theta^{\eps}$ will be used as the trading strategy and $X$  satisfies $X_{s_{i+1}-}\in[x_1,x_2]$. Finally, for $i< n-1$, set $\theta _{s_{i+1}}=\theta _{s_{i+1}-}+\kappa_1$. 
     We again obtain $j(s_{i+1},X_{s_{i+1}-}, \kappa_1)>\delta$,
      and $X_{s_{i+1}}= K_w^{-1}(s_{i+1}, j(s_{i+1},X_{s_{i+1}-}, \kappa_1)+K_w(s_{i+1},X_{s_{i+1}-})+\kappa_1)$. 
      
      If $i=n-1$, define  $\theta _{s_{i+1}}=\theta _{s_{i+1}-}$ and $X_{s_{i+1}}=X _{s_{i+1}-}$.
      
      Thus, we constructed a process $\theta$ with $n$ jumps such that $[\theta,\theta]^c\equiv 0$.
\item[Stage 3: ] On the interval $[t_2,1]$ use Lemma \ref{mm:l:appr} to construct $X$ to stay in the interval $[x_1,x_2]$ by using $x(t)=\frac{x_1+x_2}{2}$ with the same $\eps$. 
\end{itemize}

Thus, we constructed a process $\theta$ and $X$ with $n$ jumps such that $[\theta,\theta]^c\equiv 0$ and $X$ is taking values in $(m, M)$ for some $m<M$.  Since $X$ is bounded, $H(t,X_t)$ is also bounded and therefore  $\theta \in \cA(H,w,c,j)$. Moreover, due to  Theorem  \ref{mm:t:iWealth}, (\ref{mm:eq:uljumps}) and the fact that $\Psi^{f(Z_1)}(0,0)\geq 0$ we have

\bean
	E^{0,z}\left[W_1^{\theta}\right]& \geq& E^{0,z}\left[-\Psi^{f(Z_1)}(1-,X_{1-}) -\int_0^{1-}\int_{\xi(t,f(Z_1))} ^{X_{t-}} (H(t,u)-f(Z_1))g(t,u)dudt\right.\nn \\
	&&+\left. \sum_{0<t<1}\left\{(H(t,X_{t-})-f(Z_1))j(t,X_{t-},\Delta Y_t)-\Delta H(t,X_t)\Delta \theta_t\right\} \right].\nn
\eean
 Using the computations that led to (\ref{mm:eq:psibound})and (\ref{mm:e:gintegrable}) we obtain
$$
   E^{0,z}\left[\Psi^{f(Z_1)}(1-,X_{1-})+\int_0^{1-}\int_{\xi(t,f(Z_1))} ^{X_{t-}} (H(t,u)-f(Z_1))|g(t,u)|dudt\right]\leq \ell_1(z)<\infty, \quad \forall z.
$$
Note that the constant $\ell_1$ is independent of $n$.

Furthermore, since the jumps of $\theta$ are of size $\kappa_1$, the jumps of $X$ are uniformly bounded and  jumps occur only at $s_i$, we have $\Delta H(t,X_t)\Delta \theta_t\leq  \ell_2<\infty.$ 

Combining the above estimates with the expression for wealth, we get
$$
E^{0,z}\left[W_1^{\theta}\right]\geq E^{0,z}\left[\sum_{i=0}^{n-1}\left\{(H(s_i,X_{s_i-})-f(Z_1))j(t,X_{s_i-},\kappa_1)-\ell_2\right\} \right]-\ell_1(z),
$$
with $l_2$ being independent of $z$.

Let $j_m=\min_{(t,x)\in[t_1,t_2]\times[x_1,x_2]}j(t,x,\kappa_1)>\delta$, $\infty>j_M=\max_{(t,x)\in[t_1,t_2]\times[x_1,x_2]}j(t,x,\kappa_1)$, $h_m=\min_{(t,x)\in[t_1,t_2]\times[x_1,x_2]}H(t,x)>-\infty$, $h_M=\max_{(t,x)\in[t_1,t_2]\times[x_1,x_2]}H(t,x)<\infty$. We have
\begin{eqnarray*}
 E^{0,z}\left[(H(s_i,X_{s_i-})-f(Z_1))j(t,X_{s_i-},\kappa_1) \right]&\geq&E^{0,z}\left[(h_m-f(Z_1))j(t,X_{s_i-},\kappa_1) \right]\\
            &=&E^{0,z}\left[(h_m-f(Z_1))j(t,X_{s_i-},\kappa_1)\chf_{[h_m<f(Z_1)]} \right]\\
           &&+E^{0,z}\left[(h_m-f(Z_1))j(t,X_{s_i-},\kappa_1)\chf_{[h_m\geq f(Z_1)]} \right]\\
           & \geq &(j_M-j_m) \left(\min\{0,h_m\}-E^{0,z}\left[f(Z_1)\chf_{[h_m<f(Z_1)]}\right]\right)\\
           && +j_m \left(h_m -E^{0,z}\left[f(Z_1)\right]\right).           
 \end{eqnarray*}
 Since $\lim_{z \rar -\infty}E^{0,z}[f(Z_1)]=-\infty$ and $\limsup_{z \rar -\infty}E^{0,z}[f(Z_1)\chf_{[f(Z_1)>k]}]<\infty$, there exists a constant $z_1$ such that for any $z<z_1$ we will have 
 $$
   E^{0,z}\left[(H(s_i,X_{s_i-})-f(Z_1))j(t,X_{s_i-},\kappa_1) \right]-\ell_2>1
 $$
 for all $i$. Thus we will have $E^{0,z}\left[W_1^{\theta}\right]\geq n-\ell_1(z),$ and letting $n\rar\infty$ will complete the proof.
\end{proof}

The following theorem allows a glimpse into the optimal strategy of the insider. It shows that if the pricing rule satisfies $c=j=0$, it is not optimal for the insider to use jumps.
\begin{theorem} \label{t:thetacts}
	Suppose that there exists an equilibrium $((H^*,w^*),\theta^*)$  such that $H$ satisfies Assumption \ref{mm:a:prule}. Consider $g$ defined by (\ref{mm:eq:pdewg}) and assume $Z_0=Z_1$, $\lim_{z \rar \infty}f(z)=-\lim_{z \rar -\infty}f(z) =\infty$, and $\frac{g}{H_y}$ as a function on $[0,1]\times \bar{\bbR}$ with values in $\bar{\bbR}$ is continuous.  Then $P^{0,z}(\om:\theta^*_t(\om) \in C([0,1) ))=1$.
\end{theorem}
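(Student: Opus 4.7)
The plan is to argue by contradiction: assume there exists $z_0\in\bbR$ with $P^{0,z_0}(\theta^*\text{ has a jump in }[0,1))>0$ and construct an absolutely continuous alternative $\tilde\theta_n\in\cA(H^*,w^*)$ with $E^{0,z_0}[W_1^{\tilde\theta_n}]>E^{0,z_0}[W_1^{\theta^*}]$ for $n$ large, contradicting the insider-optimality clause of the equilibrium. First I would invoke Theorem \ref{mm:t:iWealth} with $c\equiv j\equiv 0$ and use the calculation at the end of that proof to rewrite each jump summand as
\[
\int_{X^*_{t-}}^{X^*_t}\frac{H^*(t,u)-H^*(t,X^*_t)}{w^*(t,u)}\,du,
\]
which is non-positive, and strictly negative whenever $\Delta X^*_t\neq 0$, by the strict monotonicity of $H^*$ and positivity of $w^*$. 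Since $j\equiv 0$ forces $K_{w^*}(t,X^*_t)-K_{w^*}(t,X^*_{t-})=\Delta\theta^*_t$, jumps of $X^*$ coincide with jumps of $\theta^*$, so under the contradiction hypothesis the jump sum $J(\theta^*)$ appearing in the wealth representation is strictly negative with positive $P^{0,z_0}$-probability.

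Next I would construct the alternative strategy by smoothing the jumps of $\theta^*$. The hypothesis $Z_0=Z_1$ makes $f(Z_1)$ known to the insider at time $0$, and together with $\lim_{z\to\pm\infty}f(z)=\pm\infty$ and the surjectivity established in Remark \ref{mm:r:Kwubdd}, this guarantees that the terminal target $\xi(1-,f(Z_1))$ is well-defined and reachable via absolutely continuous controls. Iterating Lemma \ref{mm:l:appr}, exactly as in the construction used in the proof of Theorem \ref{t:jmprice}, I would replace each jump of $\theta^*$ by a short deterministic AC ramp of duration $1/n$ joining the same endpoints in $X$, leaving $\tilde\theta_n$ equal to $\theta^*$ elsewhere. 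This yields a sequence of absolutely continuous $\tilde\theta_n\in\cA(H^*,w^*)$ whose signals $\tilde X_n$ stay in a bounded region and satisfy $\tilde X_n(1-)\to X^*_{1-}$ in $L^2(P^{0,z_0})$ as $n\to\infty$.

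For the comparison, since $\tilde\theta_n$ is AC its wealth representation has no continuous-QV or jump contributions, and subtracting yields
\[
E^{0,z_0}[W_1^{\tilde\theta_n}]-E^{0,z_0}[W_1^{\theta^*}] = -E^{0,z_0}[J(\theta^*)] + \tfrac{1}{2}E^{0,z_0}\!\left[\int_0^{1-} w^*H^*_x\,d[\theta^*,\theta^*]^c\right] + \eps_n,
\]
where $\eps_n$ gathers the differences of $\Psi^{f(Z_1)}(1-,\cdot)$ and of the path-dependent $g$-integral along $\tilde X_n$ versus $X^*$. The first two terms on the right are non-negative, with the first strictly positive by the first step, so once $\eps_n\to 0$ the alternative strictly outperforms $\theta^*$ for large $n$, contradicting optimality and forcing $\theta^*$ to have continuous paths under $P^{0,z}$ for every $z$.

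The main obstacle is precisely the verification that $\eps_n\to 0$. The path-dependent $g$-integral difference has to be controlled uniformly in $n$, and the assumption that $g/H^*_x$ extends continuously to the compactification $[0,1]\times\bar{\bbR}$ is exactly what makes a dominated-convergence argument applicable: combined with the uniform compact range of $\tilde X_n$ from the ramp construction and the $L^2$-convergence of the terminal states, it forces the integrand differences to vanish in $L^1(P^{0,z_0})$. A subsidiary challenge is maintaining admissibility of $\tilde\theta_n$, in particular the square-integrability condition (\ref{mm:e:theta_cond_2}), which again follows from the uniform boundedness of $\tilde X_n$ built into the ramp construction, together with the square-integrability hypothesis (\ref{mm:eq:sqint}) on $f(Z_1)$.
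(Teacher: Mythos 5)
Your first step is correct and coincides with the paper's: with $c=j=0$ each jump summand in the representation of Theorem \ref{mm:t:iWealth} equals $\int_{X^*_{t-}}^{X^*_t}\frac{H^*(t,u)-H^*(t,X^*_t)}{w^*(t,u)}\,du\le 0$, strictly negative whenever $\Delta X^*_t\neq 0$, so a jump is a pure loss and one wants a continuous competitor. The genuine gap is in the construction of that competitor. A ``deterministic AC ramp of duration $1/n$ joining the same endpoints in $X$'' does not exist: the endpoint $X^*_{t_0+1/n}$ depends on the Brownian increments over the ramp interval, so no deterministic control --- and no absolutely continuous adapted control without a singular bridge-type drift --- can hit it, and without exact re-coupling to $X^*$ your error term $\eps_n$ involves the difference of $\Psi^{f(Z_1)}(1-,\cdot)$ and of the $g$-integral along two genuinely different paths on all of $[t_0,1)$, which your dominated-convergence sketch does not control. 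Moreover, if $\tilde\theta_n$ equals $\theta^*$ off the ramps it inherits the continuous local-martingale part of $\theta^*$ (the theorem asserts only continuity of $\theta^*$, not finite variation), so $\tilde\theta_n$ cannot be absolutely continuous as claimed, and the continuous-QV term in your comparison formula should cancel rather than appear as a gain (harmless for the sign, but symptomatic). Finally, $\tilde X_n$ has no reason to ``stay in a bounded region,'' since $X^*$ itself need not be bounded; and (\ref{mm:eq:sqint}) is not a hypothesis of this theorem.

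The paper's proof is built precisely to avoid these problems. It modifies $\theta^*$ only after the \emph{first} jump time $\nu$, and only on the event $|X^*_{\nu-}|\le n$ (securing admissibility and boundedness of the modification). It suppresses the jump, uses $Z_0=Z_1$ --- which makes $f(Z_1)=f(z)$ a known constant under $P^{0,z}$ --- to drive $X^{\eps}$ to the deterministic level $f(z)$ in time $\eps(\nu)$, and then re-couples with $X^*$ by time $\nu+2\eps(\nu)$ via the bridge SDE $dR^{\eps}_t=dB_t+\frac{R^*_t-R^{\eps}_t}{\nu+2\eps(\nu)-t}\,dt$, checking that $R^{\eps}$ is a semimartingale up to the coupling time and that $|X^{\eps}_t-f(z)|\le |X^*_t-f(z)|$ throughout, so that the extra $g$-cost is $O(\eps)$ by the lower bound of Lemma \ref{mm:l:gbbelow}. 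After the coupling time the competitor is literally $\theta^*$, so all terminal and path-dependent terms cancel exactly and no $L^2$-convergence of terminal states is needed; the strictly negative jump term then yields the contradiction. You would need to replace your ramp-and-smoothing step by such a coupling argument for the proof to go through.
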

The proof of this theorem is postponed to the Appendix. It relies on the following lemma that will also be useful in the proof of Theorem \ref{t:gzero}. This lemma shows that only a class of weighting functions that satisfy a further condition on $g$ can be supported in an equilibrium. Thus it allows us to restrict the set of admissible pricing rules further.  
\begin{lemma}\label{mm:l:gbbelow}
	Suppose $Z_0=Z_1$, $\lim_{z \rar \infty}f(z)=-\lim_{z \rar -\infty}f(z) =\infty$, and there exists an equilibrium $((H,w),\theta)$ such that $H$ satisfies Assumption \ref{mm:a:prule}. Consider $g$ defined by (\ref{mm:eq:pdewg}) and assume  that $\frac{g}{H_y}$ as a function on $[0,1]\times \bar{\bbR}$ with values in $\bar{\bbR}$ is continuous. Then  $\frac{g}{H_y}$ is bounded from below on $[0,1]\times \bbR$. 
\end{lemma}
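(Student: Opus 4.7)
My plan is to argue by contradiction: assume $g/H_y$ is unbounded below on $[0,1]\times\bbR$, and construct, for a positive-$\bbQ$-measure set of initial signals $z$, an admissible \emph{absolutely continuous} strategy whose expected wealth is arbitrarily large, violating the insider optimality condition of Definition \ref{eqd}. The engine is Theorem \ref{mm:t:iWealth} combined with the necessary conditions $c\equiv j\equiv 0$ from Theorem \ref{t:jmprice}: for an absolutely continuous $\theta\in\cA(H,w)$ with state $X$, the expected wealth collapses to
\[
E^{0,z}[W^\theta_1] = \Psi^a(0,0) - E^{0,z}[\Psi^a(1-,X_{1-})] - E^{0,z}\!\left[\int_0^{1-}\!\!\int_{\xi(t,a)}^{X_t}(H(t,u)-a)g(t,u)\,du\,dt\right],
\]
where $a:=f(z)$. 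The idea is to drive the double integral to $-\infty$ by placing $X_t$ in a region where $g$ is very negative while keeping $H(t,u)-a>0$ large.

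To locate such a region, I would exploit that $g/H_y$ is continuous into $\bar{\bbR}$ on the compact product $[0,1]\times\bar{\bbR}$ and is real-valued on $[0,1]\times\bbR$: unboundedness from below forces $(g/H_y)(t_0,\pm\infty)=-\infty$ at some $t_0$. Without loss of generality I would take $(g/H_y)(t_0,+\infty)=-\infty$ with $t_0\in[0,1)$ (the symmetric singularity at $-\infty$ is handled by swapping $z\to+\infty$ below, and $t_0=1$ by localizing in a left neighborhood of $1$). Since basic neighborhoods of $(t_0,+\infty)$ in $[0,1]\times\bar{\bbR}$ have the form $J\times(L,+\infty]$, continuity furnishes an open interval $J\ni t_0$ with $\bar J\subset[0,1)$, a threshold $L>0$, and some $N_0>0$ such that $g(t,x)<-N_0 H_y(t,x)$ on $J\times[L,\infty)$.

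Next I would apply Lemma \ref{mm:l:appr} to a continuous deterministic path $x^M(\cdot)$ that is identically $0$ outside $J$, identically equal to $H^{-1}(t,M)$ on a closed sub-interval $J'\subset\subset J$, and interpolates continuously on $J\setminus J'$ while remaining in $[0,H^{-1}(t,M)]$. The lemma delivers an absolutely continuous $\theta^M\in\cA(H,w)$ whose state $X^M$ shadows $x^M$ uniformly; hence $X^M$ is deterministically bounded (so admissibility, in particular $E^{0,z}[\int_0^1 H^2(t,X^M_t)dt]<\infty$, is immediate), and $X^M_{1-}$ lies in a fixed bounded set independent of $M$, keeping $E^{0,z}[\Psi^a(1-,X^M_{1-})]$ uniformly bounded in $M$. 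Finally I would restrict to $E:=\{z:\,f(z)<\inf_{t\in\bar J}H(t,L)\}$, which by $\lim_{z\to-\infty}f(z)=-\infty$ together with the usual non-degeneracy of $Z_0$'s distribution has positive $\bbQ$-measure. For $z\in E$ one has $\xi(t,a)<L$ for every $t\in J$; splitting the inner integral as $\int_{\xi(t,a)}^L +\int_L^{X^M_t}$, the first piece contributes an $O(1)$ term controlled by the continuity of $g$ on the bounded set $[\xi(t,a),L]$, while on $[L,X^M_t]$ one has $(H-a)>0$ and $g\le -N_0 H_y$, so
\[
\int_L^{X^M_t}(H(t,u)-a)g(t,u)\,du \;\le\; -\tfrac{N_0}{2}\big[(H(t,X^M_t)-a)^2 - (H(t,L)-a)^2\big],
\]
with $H(t,X^M_t)=M$ on $J'$. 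Outside $J'$ the upper bound on the integrand is still $M$-independent (the only positive contribution to $(H-a)g$ lives on $[\xi(t,a),L]$, where everything is bounded in $M$). Summing,
\[
E^{0,z}[W^{\theta^M}_1] \;\ge\; \tfrac{N_0|J'|}{2}(M-a)^2 - \tilde C(z,L) \;\xrightarrow[M\to\infty]{}\; +\infty,
\]
contradicting the finiteness of the equilibrium value function on $E$.

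I expect the main obstacle to be a careful invocation of Lemma \ref{mm:l:appr} that simultaneously gives an admissible $\theta^M$, the pointwise bound $X^M_t\le H^{-1}(t,M)$ that the ramp argument uses, and a \emph{uniform-in-$M$} bound on $X^M_{1-}$ needed to control the endpoint $\Psi^a(1-,X^M_{1-})$; secondary bookkeeping concerns (the boundary cases $t_0\in\{0,1\}$, the symmetric singularity at $-\infty$, and the required non-degeneracy of $Z_0$) should all be routine.
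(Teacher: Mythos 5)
Your overall route is the paper's for the main step: use the compactness of $[0,1]\times\bar{\bbR}$ and the continuity of $g/H_y$ into $\bar{\bbR}$ to locate a point $(t_0,\pm\infty)$ where the limit is negative, then run a ramp strategy built from Lemma \ref{mm:l:appr} whose double-integral term in the wealth representation of Theorem \ref{mm:t:iWealth} blows up quadratically. However, there is a genuine gap: your construction presupposes that $H(t,\cdot)$ is unbounded above, both literally (you write $H^{-1}(t,M)$ for arbitrarily large $M$, which need not exist) and substantively (your divergence $\tfrac{N_0|J'|}{2}(M-a)^2\to\infty$ rests on being able to make $H(t,X^M_t)$ arbitrarily large). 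Nothing in the hypotheses guarantees this. If $H(t,\infty)=h<\infty$ near $t_0$, then pushing $X_t$ toward $+\infty$ only yields the quantity $\int_L^{\infty}(H(t,u)-a)g(t,u)\,du=\int_{H(t,L)}^{h}(v-a)\tilde{g}(t,v)\,dv$ with $\tilde{g}(t,v)=(g/H_y)(t,H^{-1}(t,v))$, and this can converge even when $g/H_y\to-\infty$ (e.g. $H(t,u)=h-e^{-u}$, $g(t,u)=-e^{-u/2}$), so your strategy produces only finite profit and no contradiction.

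The paper closes exactly this hole as the first step of its proof: it shows that in equilibrium $H(t,\infty)=-H(t,-\infty)=\infty$ for all $t$, by a different mechanism --- if $H(\hat{t},\cdot)\geq h$, rationality (the martingale property of $S$ under $\bbP$ together with the full support of $X_s$ from Remark \ref{mm:r:Kwubdd}) propagates the bound to all $s\leq\hat{t}$, and then the constant-rate selling strategy $d\theta_t=-n\,\chf_{[0,\hat{t}]}(t)\,dt$ with $f(z)<h$ earns at least $n(h-f(z))\hat{t}\to\infty$. This step uses the market efficiency condition, which your argument never invokes; with that preliminary in hand your quadratic estimate goes through, and without it, it does not. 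The remaining items you flag as routine (the case $t_0\in\{0,1\}$, the symmetric singularity at $-\infty$, and the choice of $z$ --- note a single $z$ suffices since Definition \ref{eqd} requires finiteness of the value for every $z$, so no non-degeneracy of the law of $Z_0$ is needed) are indeed routine.
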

\begin{proof}
	Let us first show that the existence of equilibrium implies $H(t,\infty)=-H(t,-\infty)=\infty$ for all $t \in [0,1]$. Indeed, if there exists $\hat{t}$ such that $H(\hat{t},x)\geq h$ for all $x \in \bbR$, then for all $s \leq \hat{t}$ we have $H(s,x)\geq h$ for all $x \in \bbR$. This follows from the fact that $Y$ is a Brownian motion in the market maker's filtration in equilibrium and, thus,  the random variable $X_t$ has full support due to Remark \ref{mm:r:Kwubdd}. That $Y$ is a Brownian motion is a consequence of rationality, $H_x>0$ and $H$ satisfies (\ref{mm:e:pdeh}).
	
	However, uniform boundedness of the price from below on $[0,\hat{t}]$ gives  an unbounded profit for the insider contradicting the definition of equilibrium.  Indeed, since $f$ is unbounded from below there exists a $z \in \bbR$ such that $f(z)< h$.  Consider the trading strategy
	\[
	d\theta_t=-n \chf_{[0,\hat{t}]}(t)dt
	\]
	and note that integrating by parts the associated final wealth we obtain
	\[
	W_1=-n\int_0^{\hat{t}}\left(f(z)-H(t,X_t)\right)dt\geq n (h-f(z))\hat{t}\rar\infty \mbox{ as } n \rar \infty.
		\]
		
		Therefore, we can assume that $H(t,\infty)=-H(t,-\infty)=\infty$ for all $t \in [0,1]$.
		
	Next denote $\frac{g}{H_y}(t,H^{-1}(t,x))$  by $\tilde{g}(t,x)$  and observe that 
	\[
	\int_{\xi(t,a)}^{\infty}(H(t,u)-a) g(t,u)du=\int_{a}^{\infty}(u-a) \tilde{g}(t,u)du.
	\]
	
	Suppose that for all $t \in [0,1]$ we have $\lim_{u \rar \infty} \frac{g}{H_y}(t,H^{-1}(t,u))=\lim_{u \rar \infty} \tilde{g}(t,u)\geq 0$ as well as $\lim_{u \rar -\infty} \tilde{g}(t,u)\geq 0$, and  consider  $A_n:=\{(t,u)\in [0,1]\times \bbR:\tilde{g}(t,u)\leq -n\}$. Clearly, $A_n$ is closed for each $n\geq 1$. Moreover, it is also bounded. Indeed, suppose there exists a sequence $(t_m,u_m)_{m\geq 1}\subset A_n$ such that $\lim_{m\rar \infty}t_m=t\leq 1$ and $\lim_{m \rar \infty}u_m=\infty$ or $-\infty$. Then, $\lim_{m\rar \infty}\tilde{g}(t_m,u_m)\geq 0$ due to the joint continuity of $\tilde{g}$, which is a contradiction. Thus $A_n$s are  compact and their intersection would be nonempty by the nested set property if  $\tilde{g}$ were not bounded from below.  However, if $(\hat{t},\hat{u})\in [0,1]\times \bbR$ belongs to the intersection, $\tilde{g}(\hat{t},\hat{u})=-\infty$. Thus, the intersection must be empty and, therefore, $\tilde{g}$ is bounded from below. 
	
	Next suppose that there  exists a $\hat{t}$ such that either $\lim_{u \rar \infty} \tilde{g}(\hat{t},u)< 0$ or $\lim_{u \rar -\infty} \tilde{g}(\hat{t},u)< 0$. Without loss of generality assume the former and observe that this leads to $ \tilde{g}(t,x)< -c$ for all $(t,x)$ in  $[t_1,t_2]\times [x_1,\infty)$ for some $c>0$ and $t_1, t_2$ in $[0,1]$ and $x_1 \in \bbR$ due to the joint continuity of $\tilde{g}$. Note that  $x_1$ can be assumed to satisfy $H(t,x_1)\geq f(z)$ for all $t \in [t_1,t_2]$ due to the continuity  and monotonicity of $H$.
	
	Let $x^n:[0,1] \to \bbR$ be the piecewise linear function defined by $x^n(0)=0, x^n(t_1)=x_1 +\half, x^n(\frac{3t_1+t_2}{4})=x_1 + n+\half, x^n(t_2)=x_1+ \half, x^n(1)=f(z)$ and $x^n(t)=x_1+ n+ \half$ for all $t \in [\frac{3t_1 +t_2}{4},\frac{t_1+3t_2}{4}]$. Consider $\eps =\frac{1}{4}$ and an application of Lemma \ref{mm:l:appr} yields an existence of an admissible strategy $\theta^n$ that is continuous and of finite variation and satisfies $\sup_{t\in [0,1]}|X^n_t-x^n(t)|<\frac{1}{4}$, where $X^n$ is as in the Lemma. The wealth associated with this strategy is given by
	\[
	W^n_1:=\Psi^{f(z)}(0,0)-E^{0,z}\left[\int_0^1\int_{f(z)}^{H(t,X^n_t)}(u-f(z))\tilde{g}(t,u)dudt\right]
	\]
	by Theorem \ref{mm:t:iWealth}. Since $X^n$ is uniformly bounded on $[0,t_1]$ and$[t_2,1]$, we only need to consider the above integral on $[t_1,t_2]$. Moreover, continuity of $\tilde{g}$ implies
	\bean
	W^n_1 &\geq&\ell - E^{0,z}\left[\int_{t_1}^{t_2}\int_{x_1}^{H(t,X^n_t)}(u-f(z))\tilde{g}(t,u)dudt\right]\\ &\geq &\ell - E^{0,z}\left[\int_{\frac{3t_1+t_2}{4}}^{\frac{3t_2+t_1}{4}}\int_{x_1}^{H(t,X^n_t)}(u-f(z))\tilde{g}(t,u)dudt\right]\\
	&\geq& \ell + \frac{c}{2} E^{0,z}\left[\int_{\frac{3t_1+t_2}{4}}^{\frac{3t_2+t_1}{4}}(H(t,X^n_t)-f(z))^2dt\right],
	\eean
	where the second and third inequality are due to $\tilde{g}$ being less than or equal to $-c$ on $[t_1,t_2]\times [x_1,\infty)$. Since $H(t, X^n_t)\geq f(z)$ on $[\frac{3t_1 +t_2}{4},\frac{t_1+3t_2}{4}]$, it follows from the monotone convergence theorem that $W^n_1\rar \infty$ as $n \rar \infty$, which contradicts the definition of equilibrium. 
\end{proof}
\begin{remark}\label{mm:r:noH}
	Note that in view of the above lemma we can take $H$ to be the identity function. Indeed, if $(H,w)$ is a pricing rule satisfying the conditions of above lemma, then $(\tilde{H}, \tilde{w})$, where $\tilde{H}(t,x)=x$ and $\tilde{w}(t,x)=H_x(t,H^{-1}(t,x))w(t,H^{-1}(t,x))$ is also a pricing rule satisfying the conditions of the lemma with $\tilde{g}(t,x)=\frac{g(t,H^{-1}(t,x))}{H_x(t,H^{-1}(t,x))}$. Moreover,
	\[
	\int_{\xi(t,a)}^{x}(H(t,u)-a) g(t,u)du=\int_{a}^{H(t,x)}(u-a) \tilde{g}(t,u)du,
	\]
	and $H(t,\infty)=-H(t,-\infty)=\infty$ for all $t \in [0,1]$.
	
	Moreover, consider $S_t:=H(t,X_t)$, where $X$ is the unique solution of (\ref{mm:eq:signal_mm1}) with $c=j=0$. Then
	\[
	dS_t= \tilde{w}(t,S_{t-})dY^c_t+ \frac{\tilde{w}(t,S_{t-})\tilde{w}_x(t,S_{t-})}{2}\left(d[Y,Y]^c_t-dt\right)+(H(t,X_{t})-H(t,X_{t-}),
	\]
	and, therefore, $S$  satisfies (\ref{mm:eq:signal_mm1}) with $w=\tilde{w}$. Indeed,
	\[
	S_{t-}+H(t,X_{t})-H(t,X_{t-})=K_{\tilde{w}}^{-1}(t,K_{\tilde{w}}(t,S_{t-})+\Delta Y_t).
	\]
	The above is equivalent to 
	\[
 \Delta Y_t= K_{\tilde{w}}(t, H(t,X_t))-K_{\tilde{w}}(t,S_{t-})=\int_{S_{t-}}^{S_t}\frac{1}{\tilde{w}(t,y)}dy=\int_{X_{t-}}^{X_t}\frac{1}{w(t,y)}dy,
	\]
which holds in view of dynamics of $X$.	Therefore, without loss of generality we can assume $H$ is identity. 
\end{remark}

We can now give a proof of Theorem \ref{t:gzero}.

\begin{proof}[Proof of Theorem \ref{t:gzero}]
	Note that $\theta^*$ is continuous in view of Theorem \ref{t:thetacts}. Observe that $H^*$ can be taken equal to identity in view of Remark \ref{mm:r:noH}.  We will also restrict our attention to $z\in E$. 
	
	 Fix $z\in E$.  We show that $X^*$ must be such that  $X^*_t=\arg \min_x \int_{f(z)}^x (y-f(z)) g(t,y)dy$ for all $t\in(0,\nu)$. Suppose not, i.e. assume that there exists $t\in (0,\nu)$ and $\delta>0$ such that
	 $$
	    P^{0,z}\left[\int_{f(z)}^{X^*_t} (y-f(z)) g(t,y)dy -\int_{f(z)}^{s^{f(z)}(t)} (y-f(z)) g(t,y)dy>\delta\right]>0.
	$$
	since both $X^*$ and $s^{f(z)}$ are continuous on $[0,\nu]$ and $$\int_{f(z)}^{X^*_u} (y-f(z)) g(u,y)dy  -\int_{f(z)}^{s^{f(z)}(u)} (y-f(z)) g(u,y)dy\geq 0$$ for all $u\in[0,\nu]$ we will have 
	 $$
	    P^{0,z}\left[\int_0^{\nu}\int_{f(z)}^{X^*_t} (y-f(z)) g(t,y)dydt -\int_0^{\nu}\int_{f(z)}^{s^{f(z)}(t)} (y-f(z)) g(t,y)dydt>0\right]>0.
	$$
         This implies 
         \be\label{mm:eq:estsast}
             E^{0,z}\left[\int_0^{\nu}\int_{s^{f(z)}(t)}^{X^*_t} (y-f(z)) g(t,y)dydt\right]=\delta>0
         \ee

For any $\eps>0$ consider $s^{\eps}$ such that $s^{\eps}(0)=0$, $s^{\eps}=s^{f(z)}$ on $[\eps,\nu)$, and $s^{\eps}$ is continuous and monotone on $[0,\eps]$. Due to Lemma \ref{mm:l:appr} there exists $X^{\eps}$ such that 
\[
\sup_{t\in [0,\nu]}|s^{\eps}(t)-X^{\eps}_t|<\eps.
\]
We will use the corresponding $\theta^{\eps}$ as the insider's strategy on $[0,\nu]$.

On $[\nu, \nu+\eps]$ set $d\theta^{\eps}=-dB_t+\frac{f(z)-X^{\eps}_{\nu}}{\eps w(t,X^{\eps}_t)}dt$ and note that $X^{\eps}$ remains bounded on $[\nu,\nu+\eps]$ and $X^{\eps}_{\nu+\eps}=f(z)$. 

Now consider the interval $[\nu+\eps,\nu+2 \eps]$ and introduce
\[
dR_t^{\eps}=dB_t+ \frac{R^*_t-R^{\eps}_t}{\nu+2\eps- t}dt,
\]
with $R^{\eps}_{\nu+\eps}=K_w(\nu+\eps,f(z))$. It is easy to see that the solution to the above SDE on $[\nu+\eps,\nu+2\eps)$ is given by
\[
R_t^{\eps}=(\nu+2\eps-t)\left[R^{\eps}_{\nu+\eps}+\int_{\nu+\eps}^t\frac{1}{\nu+2\eps-s}dB_s +\int_{\nu+\eps}^t\frac{R^*_s}{(\nu+2\eps-s)^2}ds\right],
\]
where $R^*_t=K_w(t,X^*_t)$. Observe that the first two terms in the square brackets multiplies by $(\nu+2\eps-t)$ converge two $0$ as $t \rar \nu+2\eps$ in view of Exercise IX.2.12 in \cite{RY}. Moreover, on $[R^*_{\nu+2\eps}\neq 0]$ an application of L'Hospital's rule shows that the third term multiplied by $(\nu+2\eps-t)$ converges to $R^*_{\nu +2\eps}$. Similarly, on $[R^*_{\nu+2\eps}=0]$, $(\nu+2\eps-t)\int_{\nu+\eps}^t\frac{|R^*_s|}{(\nu+2\eps-s)^2}ds\rar 0$. Therefore, $R^{\eps}_{\nu+2 \eps}=R^*_{\nu+2\eps},$ a.s.. Note that if we define $\tau_R:=\inf\{t\geq \nu+\eps: R^*_t=R^{\eps}_t\}$, then $R^{\eps}$ is a semimartingale on $[\nu+\eps,\tau_R]$. Indeed,
\[
\int_{\nu+\eps}^{\tau_R}\frac{|R^*_t-R^{\eps}_t|}{(\nu+2\eps-t)}dt=\left|\int_{\nu+\eps}^{\tau_R}\frac{R^*_t-R^{\eps}_t}{(\nu+2\eps-t)}dt\right|=\left|B_{\tau_R}-R^{\eps}_{\tau_R}-B_{\nu+\eps}+R^{\eps}_{\nu+\eps}\right|<\infty.
\]

Next we define $\tilde{X_t}= K_w^{-1}(t,R^{\eps}_t)$ for $t \in [\nu+\eps,\nu+2\eps]$  and  set 
\[
X^{\eps}_t=\left\{\ba{ll}
f(z)+(\tilde{X_t}-f(z))^+, & \mbox{ if } X^*_{\nu+\eps}\geq f(z); \\
f(z)-(\tilde{X_t}-f(z))^-, & \mbox{ if } X^*_{\nu+\eps}< f(z),
\ea \right . \quad \forall t \in [\nu+\eps, \tau],
\] 
 where $\tau=\inf\{t\geq \nu+\eps: X^*_t= f(z)\}\wedge\tau_R$.  First note that $\tau \leq \nu+2\eps$ and for all $t\in [\nu+\eps,\tau]$ we have either $X^*_t\geq X^{\eps}_t\geq f(z)$ or $f(z)\geq X^{\eps}_t\geq X^*_t$ depending on  $X^*_{\nu+\eps}$. Moreover,  $X^{\eps}_{\tau}=X^*_{\tau}$ so we can set $X^{\eps}_t=X^*_t$ for $t \in [\tau, 1]$. 

$X^{\eps}$ on $[\nu+\eps,\tau]$ satisfies
\[
dX^{\eps}_t=w(t,X^{\eps}_t)(dB_t +d\theta^{\eps}_t),
\]
where in case $X^*_{\nu+\eps}\geq f(z)$
\[
d\theta^{\eps}_t=\chf_{[X^{\eps}_t >f(z)]}\left( G(t,X_t^{\eps})+\frac{R^*_t-R^{\eps}_t}{\nu+2\eps-t}\right)dt +\frac{1}{2}d\tilde{L}_t -\chf_{[X^{\eps}_t=f(z)]}dB_t,
\]
and $\tilde{L}$ is the local time of $\tilde{X}$ at $f(z)$ in view of Theorem 68 in Chap. IV of \cite{Pro}. Similarly, if $X^*_{\nu+\eps}< f(z)$,
\[
d\theta^{\eps}_t=\chf_{[X^{\eps}_t <f(z)]}\left( G(t,X_t^{\eps})+\frac{R^*_t-R^{\eps}_t}{\nu+2\eps-t}\right)dt -\frac{1}{2}d\tilde{L}_t -\chf_{[X^{\eps}_t=f(z)]}dB_t.
\]

Clearly, $\theta^{\eps}$ is admissible since $X^{\eps}$ is  bounded on $[0,\nu+\eps]$ and $|X^{\eps}_t -f(z)|\leq |X^*_t-f(z)|$ for all $t \in [\nu+\eps,1]$ and $X^*$ is admissible.

We shall show that the above strategy will outperform $\theta^*$ for small enough $\eps$. 
\bean
E^{0,z}[W^{\eps}_1-W^*_1]&\geq & E^{0,z}\left[-\int_0^{\eps}\int_{s^{f(z)}(t)}^{X^{\eps}_t} (u-f(z))g(t,u)dudt-\int_{\eps}^{\nu}\int_{s^{f(z)}(t)} ^{X^{\eps}_{t}} (u-f(z))g(t,u)dudt\right.\\
&& +\int_{\nu}^{\nu+\eps}\int_{f(z)}^{X^*_{t}} (u-f(z))(g(t,u)+C)dudt-\int_{\nu}^{\nu+\eps}\int^{X^{\eps}_t}_{f(z)} (u-f(z))g(t,u)dudt\\
&&-C\int_{\nu}^{\nu+\eps}\int_{f(z)}^{X^*_{t}} (u-f(z))dudt + \int_{\nu+\eps}^{\nu+2\eps}\int_{X^{\eps}_t}^{X^*_t} (u-f(z))(g(t,u)+C)dudt \\
&&-\left. C\int_{\nu+\eps}^{\nu+2\eps}\int_{X^{\eps}_t}^{X^*_t} (u-f(z))dudt-\eps \ell\right]+\delta\\
&\geq &-\eps \ell_1- \frac{C}{2}E^{0,z}\left[ \int_{\nu}^{\nu+2\eps} (X^*_{t}-f(z))^2dt-\int_{\nu+\eps}^{\nu+2\eps} (X^{\eps}_{t}-f(z))^2dt\right]+\delta,
\eean
where the first inequality is due to  (\ref{mm:eq:estsast}) and the boundedness of $X^{\eps}$ on $[\nu,\nu+\eps]$ in conjunction with the following estimate:
\bean
-\int_{\nu}^{1}w(t,X_t^{\eps})d[\theta^{\eps},\theta^{\eps}]_t+\int_{0}^{1}w(t,X_t^{*})d[\theta^{*},\theta^{*}]_t&=& -\int_{\nu}^{\tau}w(t,X_t^{\eps})d[\theta^{\eps},\theta^{\eps}]_t+\int_{0}^{\tau}w(t,X_t^{*})d[\theta^{*},\theta^{*}]_t\\
&\geq& -\int_{\nu}^{\nu+\eps}w(t,X_t^{\eps})dt+\int_{\nu+\eps}^{\tau}\chf_{[X^{\eps}_t=f(z)]}w(t,f(z))dt\geq \eps \ell.
\eean

  The second inequality is due to the fact that the first, second and the fourth terms are integrals of continuous functions on compact domains whose measures are proportional to $\eps$, and the sixth is positive since by construction either $X^*\geq X^{\eps}\geq f(z)$ or $X^*\leq X^{\eps}\leq f(z)$ on $[\nu+\eps, \nu+2\eps]$ and $g$ is bounded from below by some constant $-C$ in view of Lemma \ref{mm:l:gbbelow}.  Finally, the above lower estimate converges to $\delta$ as $\eps \rar 0$ due to the square integrability of $X^{\eps}$ and $X^{\eps}$.

Thus, $\theta^*$ should be such that $X^*_t= \arg \min_x \int_{f(z)}^x (y-f(z)) g(t,y)dy$ for all $t\in(0,\nu)$ for $t\in[0,\nu]$. 

Define 
$$
    E_t(s):=\{a\in\bbR: s=\arg \min
    _x \int_{a}^x (y-a) g(t,y)dy\},
$$ 
 a set of realisations of insider's signal that allow the insider to set $X^*_t=s$. 

In what follows we will show that : i) $E_t(s)$ is a connected set for all $t\in[0,\nu]$, and ii) for all $E_t(s)$ such that $s\in E_t(s)$ and $\{s\}\neq E_t(s)$ we have $g(t,y)=0$ for all $y\in E_t(s)$.

This will imply that $g\equiv 0$ for all $t\in[0,\nu)$. Indeed, suppose there exists $(t,a)\in [0,\nu)\times \bbR$ such that $g(t,a)\neq 0$. Since $g$ is continuous, there exists a set $[a_1,a_2]\subseteq \bbR$ such that $g(t,a)\neq 0$ for all $a\in [a_1,a_2]$.  Due to ii) we must have $X^*_t=a$ on the set $\{f(Z_1)=a\}$ for all  $a\in [a_1,a_2]$. Indeed, suppose on the set $\{f(Z_1)=a\}$ we have $X^*_t=s\neq a$. Since $s=X^*_t= \arg \min_x \int_{f(z)}^x (y-f(z)) g(t,y)dy$, ii) yields that $s\notin E_t(s)$ (since $a\in E_t(s)$), but then we have a contradiction to the rationality of the pricing rule as $s=X_t^*=\bbE[f(Z_1)|\cF_t^M]=\bbE[f(Z_1)\chf_{[f(Z_1)\in E_t(s)]}|\cF_t^M]\in E_t(s)$, where the last inclusion is due to i). Thus, we have $X^*_t=a$ on the set $\{f(Z_1)=a\}$ for all  $a\in [a_1,a_2]$, or, equivalently, $X^*_t=f(Z_1)$ for all $Z_1\in f^{-1}([a_1,a_2])$, but this means that the rational pricing rule must satisfy $X^*_s= X^*_t$ for any $s\geq t$ which is not consistent with the definition of the pricing rule since $w>0$.

 \begin{itemize}
   \item[i)] {\bf $\mathbf{E_t(s)}$ is connected:} Suppose $a_1,a_2\in E_t(s)$ and $\lambda\in[0,1]$. We need to show that $a=\lambda a_1+(1-\lambda)a_2\in E_t(s)$. Indeed,  for any $r\in \bbR$ we have:
           \begin{eqnarray*}
                \int_{a}^r (y-a) g(t,y)dy-\int_{a}^s (y-a) g(t,y)dy&=& \int_{s}^r (y-a) g(t,y)dy\\
                &=&\lambda\left(\int_{a_1}^r (y-a_1) g(t,y)dy-\int_{a_1}^s (y-a_1) g(t,y)dy\right)\\
                &&+(1-\lambda)\left(\int_{a_2}^r (y-a_2) g(t,y)dy-\int_{a_2}^s (y-a_2) g(t,y)dy\right)\geq 0,                
           \end{eqnarray*}
           where the last inequality is due to the fact that for $i=1,2$
           $$
               \int_{a_i}^s (y-a_i) g(t,y)dy=\min_{r\in\bbR}\int_{a_i}^r (y-a_i) g(t,y)dy.
           $$
           This implies that   $\int_{a}^r (y-a) g(t,y)dy\geq \int_{a}^s (y-a) g(t,y)dy$ for all $r\in \bbR$ and therefore $a\in E_t(s)$.
    \item[ii)] {\bf{$\mathbf{ s\in E_t(s)}$ and $\mathbf{ \{s\}\neq E_t(s)}$ $\mathbf{ \Rightarrow}$ $\mathbf{ g(t,y)=0}$ for all $\mathbf{ y\in E_t(s)}$.}} Suppose there exist $(t,s)\in[0,\nu]\times \bbR$ such that: $s\in E_t(s)$, $E_t(s)\neq \{s\}$, and $g(t,\tilde{y})\neq 0$ for some $\tilde{y}\in E_t(s)$.

            Since $s\in E_t(s)$ for any $r\in \bbR$ we will have
            $$
               \int_{s}^r (y-s) g(t,y)dy\geq \int_{s}^s (y-s) g(t,y)dy=0.
            $$
            Let $G(t,y)=\int_s^yg(t,u)du$. We have 
            \begin{eqnarray}
                0\leq \int_{s}^r (y-s) g(t,y)dy&=& \int_{s}^r (y-s) dG(t,y)=(r-s) G(t,r)-\int_{s}^r G(t,y)dy \nn\\
                                                   &=&(r-s)(G(t,r)-G(t,\psi)) \label{mm:eq:GonR}
            \end{eqnarray}
            for any $r\in \bbR$  and some $\psi\in [r\wedge s, r\vee s]$.

            Note that $\tilde{y}\neq s$ since for any $a\in E_t(s)$ we have $s=\arg \min_x \int_{a}^x (y-a) g(t,y)dy$ and therefore the first order conditions imply $g(t,s)=0$ as we can choose $a\neq s$. 
            
            Suppose $\tilde{y}>s$. Let $G(t,y^*)=\min_{y\in[s,\tilde{y}]} G(t,y)$. 
            Due to (\ref{mm:eq:GonR}) there exists $\psi\in [s, y^*]$ such that 
            \be\label{mm:eq:G_min}
              (y^*-s) G(t,y^*)-\int_{s}^{y^*} G(t,y)dy=(y^*-s)(G(t,y^*)-G(t,\psi))\geq 0
            \ee
            and therefore $G(t,y^*)\geq G(t,\psi)$. Thus, $G(t,y^*)= G(t,\psi)$ and (\ref{mm:eq:G_min}) implies            
            $$
                 \int_{s}^{y^*} G(t,y)dy=(y^*-s)G(t,\psi)=(y^*-s)G(t,y^*)=\int_{s}^{y^*} \min_{y\in[s,y^*]}G(t,y)dy.
            $$
            This yields $G(t,y)=const$ for $y\in[s,y^*]$ and in particular $G(t,s)=G(t,y^*)=\min_{y\in[s,\tilde{y}]} G(t,y)$. Since $G(t,y)=const$ for $y\in[s,y^*]$ and $g$ is continuous, we have $g(t,y)=0$ for $y\in[s,y^*]$. 
            Thus $\tilde {y}\neq y^*$ which implies $G(t,s)=\min_{y\in[s,\tilde{y}]} G(t,y)<G(t,\tilde{y})$.

             Since $\tilde{y}\in E_t(s)$ we have 
            \begin{eqnarray*}
                (s-\tilde{y})(G(t,s)-G(t,\phi))&=&(s-\tilde{y}) dG(t,s)-\int_{\tilde{y}}^s G(t,y)dy=\int_{\tilde{y}}^s (y-\tilde{y}) dG(t,y)\\
                                                           & =&\int_{\tilde{y}}^s (y-\tilde{y}) g(t,y)dy\leq \int_{\tilde{y}}^{\tilde{y}} (y-\tilde{y}) g(t,y)dy=0
            \end{eqnarray*}
            for some $\phi\in [ s, \tilde{y}]$. Thus $G(t,\phi)=G(t,s)=\min_{y\in[0,\tilde{y}]}G(t,y)$ and therefore 
            $$
             \int_s^{\tilde{y}} \min_{y\in[0,\tilde{y}]}G(t,y)dy= (\tilde{y}-s)G(t,\phi)=\int_s^{\tilde{y}} G(t,y)dy.
            $$
            Since $G$ is continuous it implies that $G(t,y)=G(t,s)$ for all $y\in[0,\tilde{y}]$ which contradicts the above result  that $G(t,s)<G(t,\tilde{y})$. Thus, we can not have $\tilde{y}\geq s$.
            
             Suppose $\tilde{y}<s$. Let $G(t,y^*)=\max_{y\in[\tilde{y},s]} G(t,y)$. 
            Due to (\ref{mm:eq:GonR}) there exists $\psi\in [ y^*,s]$ such that 
            \be\label{mm:eq:G_max}
              (y^*-s) G(t,y^*)-\int_{s}^{y^*} G(t,y)dy=(y^*-s)(G(t,y^*)-G(t,\psi))\geq 0
            \ee
            and therefore $G(t,y^*)\leq G(t,\psi)$. Thus, $G(t,y^*)= G(t,\psi)$ and (\ref{mm:eq:G_max}) implies            
            $$
                 \int_{s}^{y^*} G(t,y)dy=(y^*-s)G(t,\psi)=(y^*-s)G(t,y^*)=\int_{s}^{y^*} \max_{y\in[y^*,s]}G(t,y)dy.
            $$
            This yields $G(t,y)=const$ for $y\in[y^*,s]$ and in particular $G(t,s)=G(t,y^*)=\max_{y\in[\tilde{y},s]} G(t,y)$. Since $G(t,y)=const$ for $y\in[y^*,s]$ and $g$ is continuous, we have $g(t,y)=0$ for $y\in[y^*,s]$. 
            Thus $\tilde {y}\neq y^*$ which implies $G(t,s)=\max_{y\in[\tilde{y},s]} G(t,y)>G(t,\tilde{y})$.

             Since $\tilde{y}\in E_t(s)$ we have 
            \begin{eqnarray*}
                (s-\tilde{y})(G(t,s)-G(t,\phi))&=&(s-\tilde{y}) dG(t,s)-\int_{\tilde{y}}^s G(t,y)dy=\int_{\tilde{y}}^s (y-\tilde{y}) dG(t,y)\\
                                                           & =&\int_{\tilde{y}}^s (y-\tilde{y}) g(t,y)dy\leq \int_{\tilde{y}}^{\tilde{y}} (y-\tilde{y}) g(t,y)dy=0
            \end{eqnarray*}
            for some $\phi\in [ s, \tilde{y}]$. Thus $G(t,\phi)=G(t,s)=\max_{y\in[\tilde{y},s]}G(t,y)$ and therefore 
            $$
             \int_s^{\tilde{y}} \min_{y\in[0,\tilde{y}]}G(t,y)dy= (\tilde{y}-s)G(t,\phi)=\int_s^{\tilde{y}} G(t,y)dy.
            $$
            Since $G$ is continuous it implies that $G(t,y)=G(t,s)$ for all $y\in[0,\tilde{y}]$ which contradicts the above result  that $G(t,s)<G(t,\tilde{y})$. 
            Thus, there doesn't exist $\tilde{y}\in E_t(s)$ such that $g(t,\tilde{y})\neq 0$.
 \end{itemize}
\end{proof}
Finally, we can prove Theorem \ref{mm:t:AC}.
\begin{proof}[Proof of Theorem \ref{mm:t:AC}]
In view of (\ref{mm:eq:uljumps}) and since $w$ is positive  and $H$ is increasing, we have
	\[
	E^{0,z}\left[W_1^{\theta}\right]\leq E^{0,z}\left[\Psi^{f(Z_1)}(0,0)-\Psi^{f(Z_1)}(1-,X_{1-})\right]
	\]
	Note the inequality above becomes equality if and only if $\Delta \theta_t=0$ due to the strict monotonicity of $H$. Moreover, $\Psi^{f(Z_1)}(1-,X_{1-})\geq 0$ with an equality if and only if $H(1-, X_{1-})=f(Z_1)$. Therefore, $E^{0,z}\left[W_1^{\theta}\right]\leq E^{0,z}\left[\Psi^{f(Z_1)}(0,0)\right]$ for all admissible $\theta$s and equality is reached if and only if the following two conditions are met: 
	\begin{itemize}
		\item[i)] $\theta$ is continuous and of finite variation.
		\item[ii)] $H(1-,X_{1-})=f(Z_1), \, P^{0,z}$-a.s..
	\end{itemize} 
	
	Hence, the proof will be complete if one can find a sequence of  absolutely continuous admissible strategies, $(\theta^n)_{n \geq 1}$ such that $\lim_{n \rar \infty}E^{0,z}\left[W_1^{\theta^n}\right]=E^{0,z}\left[\Psi^{f(Z_1)}(0,0)\right]$. 
	
 Consider the bridge process, $Y$, that starts at $0$ and ends up at $M_1$ at $t=1$:
	\[
	Y_t:=B_t +\int_0^t \frac{M_s-Y_s}{1-s}ds=(1-t)\left(\int_{0}^t\frac{1}{1-s}dB_s +\int_0^t \frac{M_s}{(1-s)^2}ds\right).
	\]
It is easy to check that the above converges a.s. to $M_1$ using the continuity of $M$ and L'Hospital rule since $(1-t)\int_{0}^t\frac{1}{1-s}dB_s$ is the Brownian bridge from $0$ to $0$ as in Exercise IX.2.12 in \cite{RY}.

	To establish the semimartingale property of $Y$ first observe that
	\[
	Y_t-M_t = (1-t)\int_0^t \frac{1}{1-s}\{dB_s-dM_s\}.
	\]
	Thus, by Theorem V.1.6 in \cite{RY}, there exists a Brownian motion $\tilde{\beta}$ such that
	\[
	\int_0^1\frac{|M_t-Y_t|}{1-t}dt=\int_0^1 |\tilde{\beta}_{\tau_t}|dt,
	\]
	where 
	\[
	\tau_t=\int_0^t\frac{1+\tilde{\sigma}^2_s}{(1-s)^2}ds.
	\]
	Observe that $\tau_{1}=\infty$, $P^{0,z}$-a.s.. Thus by the law of iterated logarithm for Brownian motion (see Corollary II.1.12 in \cite{RY}), we have
	\[
	\frac{|\tilde{\beta}_{\tau_t}|}{\sqrt{ \tau_t \log \log \tau_t}}<C \; \forall t \in [0,1], \, P^{0,z}\mbox{-a.s.}
		\]
 for some finite random variable $C$. 	Therefore,
 \be \label{e:Ysemibd}
 	\int_0^1\frac{|M_t-Y_t|}{1-t}dt<C \int_0^1 \sqrt{\tau_t \log \log \tau_t}dt\leq C\int_0^1\tau_t^{\frac{1+\eps}{2}}dt,
 \ee
 for all $\eps>0$. Thus, $Y$ is a semimartingale.
 
 Note that (\ref{e:MQVbd}) implies that for any $n>1$ there exists $\delta>0$ such that for any $s \in [1-\delta,1]$,  $\tilde{\sigma}^2_s (1-s)^{\alpha-1}<\frac{1}{n}$. Therefore, for $t \geq 1-\delta$,
 \[
 (1-t)^{\alpha}\int_0^t \frac{1+\tilde{\sigma}^2_s}{(1-s)^2}ds\leq (1-t)^{\alpha-1}-(1-t)^{\alpha} + (1-t)^{\alpha}\left(\int_0^{1-\delta} \frac{1+\tilde{\sigma}^2_s}{(1-s)^2}ds+ \frac{1}{n}\left((1-t)^{-\alpha}-\delta^{-\alpha}\right)\right),
 \]
 which in turn yields $\lim_{t \rar 1}\tau_t (1-t)^{\alpha}=0$. Hence,
 \[
 \int_0^1\tau_t^{\frac{1+\eps}{2}}dt\leq \tilde{C}\int_0^1 (1-t)^{-\frac{\alpha(1+\eps)}{2}}dt<\infty
 \]
for any $\eps <\frac{2}{\alpha} -1$. This proves the semimartingale property of $Y$ in view of  (\ref{e:Ysemibd}).

	Next define the stopping times 
	\[
	\tau^{n}:=\inf\{t:|Y_t|\geq n\}
	\]
	with the convention that $\inf \emptyset =1$, and introduce the sequence of trading strategies, $\theta^{n}$ given by
	\[
	d\theta^{n}_t=\chf_{[\tau^{n}\geq t]}\frac{M_t-Y_t}{1-t}dt + \chf_{[\tau^{n,m}< t]}d\tilde{\theta}^n_t,
	\]
	where $\tilde{\theta}^n$ is the continuous and of finite variation process given by Lemma \ref{mm:l:appr} to keep $Y^n_t \in (-1-n,1+n)$ for $t\geq\tau^n$ via choosing $x(t)=Y_{\tau^n}\frac{1-t}{1-\tau^n}$ and $\eps=w=1$.  This will also ensure that $Y^n_1 \in (-1,1)$ on $[\tau^n<1]$.
	Thus, the total demand process $Y^n$ corresponding to $\theta^{n}$ satisfies 
	\begin{enumerate}
		\item $\sup_{t\in [0,1]}|Y^{n}_t|\leq n+1$, a.s.;
		\item $Y^{n}_1\chf_{[\tau^{n}=1]}=Y_1\chf_{[\tau^{n}=1]}=K_w(1,H^{-1}(1,f(Z_1)))\chf_{[\tau^{n}=1]}$, a.s..
		\item $Y^{n}_1\chf_{[\tau^{n}<1]} \in (-1,1)$.
	\end{enumerate}
	In view of Remark \ref{mm:r:Kwubdd} and the continuity of $H(1, K_w^{-1}(1,\cdot))$ we deduce that  $H(t, K_w^{-1}(t,Y^{n}_t))$ is bounded uniformly in $t$ yielding $\theta^{n}$ admissible for each $n$. 
	
	Recall that since $\theta^{n}$ is absolutely continuous, we have
	\[
	E^{0,z}[W^{\theta^{n}}]=E^{0,z}\left[\Psi^{f(Z_1)}(0,0)-\Psi^{f(Z_1)}(1,K_w^{-1}(1,Y^{n}_{1}))\right].
	\]
	On the other hand,
	\bean
	\Psi^{f(Z_1)}(1,K_w^{-1}(1,Y^{n}_{1})) &\leq& (H(1,K_w^{-1}(1,Y^{n}_1))-f(Z_1))(Y^{n}_1-K_w(1,H^{-1}(1,f(Z_1))))\\
	&=&\chf_{[\tau^{n}<1]}(H(1,K_w^{-1}(1,Y^{n}_1))-f(Z_1))(Y^{n}_1-K_w(1,H^{-1}(1,f(Z_1)))).
	\eean
	Since $f(Z_1)K_w(1,H^{-1}(1,f(Z_1)))$ is integrable and $Y^{n}_1$ is uniformly bounded, applying the dominated convergence theorem  yields 
	\[
	\lim_{n \rar \infty}E^{0,z}[W^{\theta^{n}}]=E^{0,z}\left[\Psi^{f(Z_1)}(0,0)\right],
	\]
	i.e. the expected wealth corresponding to our sequence of admissible strategies converges to the upper limit of the value function.
\end{proof}
\appendix
\section{Auxiliary results}
\begin{remark}\label{mm:r:Kwubdd} Observe that the existence of a unique strong solution in Definition \ref{mm:d:prule} implies $\min\{\bbP(K_w(t,\tilde{X}_t)>y), \bbP(K_w(t,\tilde{X}_t)<-y)\}>0$. In particular  $K_w(t,\cdot):\bbR \to\bbR$  is onto for every $t \in [0,1]$. 
	
	Indeed,
	$\bbP(K_w(t,\tilde{X}_t)>y)\geq \bbP(K_w(t,\tilde{X}_t)>y, \sup_{s\leq t}|\tilde{X}_s|\leq n)$ for some large enough $n$. On the other hand,
	application of Ito's formula yields
	\[
	K_w(t,\tilde{X}_t)= \beta_t -\int_0^t G(s,\tilde{X}_s)ds,
	\]
	where $G(t,x)=\int_0^x g(t,y)dy$ and $g(t,x):=\frac{w_t(t,x)}{w^2(t,x)}+\half w_{xx}(t,x)$ are continuous. 
	
	Thus, the law of $K_w(t\wedge \tau_n,\tilde{X}_{t\wedge \tau_n})_{t\in [0,1]}$ is equivalent to that of $(\beta_{t \wedge \nu_n})_{t \in [0,1]}$, where $\tau_n=\inf\{t\geq 0:|\tilde{X}_t|\geq n\}$ and $\nu_n=\{t\geq 0: |K^{-1}_w(t,\beta_t)|\geq n\}$ by Girsanov's theorem. Therefore, $\bbP(K_w(t,\tilde{X}_t)>y, \sup_{s\leq t}|\tilde{X}_s|\leq n)>0$ yielding the claim. Similarly, $\bbP(K_w(t,\tilde{X}_t)<-y, \sup_{s\leq t}|\tilde{X}_s|\leq n)>0$. Consequently, $\min\{\bbP(\tilde{X}_t>x), \bbP(\tilde{X}_t<-x)\}>0$ for all $t \in [0,1]$ by choosing $y=K_w(t,x)$.
\end{remark}
\begin{lemma} \label{mm:l:appr}
	Consider bounded stopping times $S \leq T$ and let $x:[S,T] \mapsto \bbR$  be continuous, adapted, and of finite variation. Then for any  $\eps>0$ there exists an adapted process $\theta^{\eps}$ that is continuous and of finite variation on $[S,T]$ such that there exists a strong solution to 
	\[
	dX^{\eps}_t=w(t,X_t)\{dB_t +d\theta^{\eps}_t\}
	\]
	with $X^{\eps}_S=x(S)$ for a given $w:[0,1]\times \bbR\to (0,\infty) \in C^{1,2}$ . Moreover, $X^{\eps}$ satisfies
	\[
	\sup_{r \in [S,T]}|X^{\eps}_r-x(r)|<\eps.
	\]
\end{lemma}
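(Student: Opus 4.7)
\emph{Approach.} The plan is to construct $X^{\eps}$ as the solution of a doubly-reflected SDE trapped inside the tube $[x(t)-\eps/2,\, x(t)+\eps/2]$, and then to read off $\theta^{\eps}$ from the reflection terms. For notational ease the argument is presented for $S=0$; the general case of bounded stopping times is identical since the Skorokhod reflection map is pathwise and preserves adaptedness.

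\emph{Step 1: the reflected SDE.} Introducing $\tilde X_t := X^{\eps}_t - x(t)$, the plan is to solve the doubly-reflected SDE in the constant interval $[-\eps/2,\eps/2]$:
\[
d\tilde X_t \;=\; w\bigl(t,\,\tilde X_t + x(t)\bigr)\, dB_t \;-\; dx(t) \;+\; dL^-_t - dL^+_t,\qquad \tilde X_0 = 0,
\]
where $L^{\pm}$ are continuous non-decreasing adapted processes with $L^{+}$ growing only on $\{\tilde X = \eps/2\}$ and $L^{-}$ only on $\{\tilde X = -\eps/2\}$. Because $w\in C^{1,2}$ is strictly positive and hence Lipschitz on any compact set, and the ``drift'' $-dx(\cdot)$ is a continuous FV adapted process, the standard pathwise Skorokhod-reflection (or Lions--Sznitman) existence theory for reflected SDEs on a bounded convex interval yields a unique strong solution $(\tilde X, L^+, L^-)$ on $[0,T]$.

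\emph{Step 2: reading off $\theta^{\eps}$.} Setting $X^{\eps}_t := \tilde X_t + x(t)$ gives $|X^{\eps}_t - x(t)|\leq \eps/2 < \eps$ uniformly on $[0,T]$, and the SDE for $\tilde X$ rewrites as $dX^{\eps}_t = w(t, X^{\eps}_t)\, dB_t + dL^-_t - dL^+_t$. I would then define
\[
\theta^{\eps}_t \;:=\; \int_0^t \frac{dL^-_s - dL^+_s}{w(s,\,X^{\eps}_s)}.
\]
Since $X^{\eps}$ lives in the compact set $\{(t,y): t\in[0,T],\, |y-x(t)|\leq \eps/2\}$ and $w$ is continuous and strictly positive there, $1/w(\cdot, X^{\eps}_{\cdot})$ is bounded, so $\theta^{\eps}$ is a well-defined continuous FV adapted process. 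By construction $w(t, X^{\eps}_t)\, d\theta^{\eps}_t = dL^-_t - dL^+_t$, whence $dX^{\eps}_t = w(t, X^{\eps}_t)(dB_t + d\theta^{\eps}_t)$ and $X^{\eps}$ is the claimed strong solution driven by $\theta^{\eps}$.

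\emph{Main obstacle.} The substantive content is the invocation of the existence theorem for one-dimensional doubly-reflected SDEs with Lipschitz diffusion and continuous FV drift in Step~1; the remainder is bookkeeping. The small technical points requiring care are: (i) reflecting in the narrower tube of half-width $\eps/2$ rather than $\eps$ so that the bound $|X^{\eps}-x|<\eps$ is strict; (ii) verifying that $w$ is bounded above and below along the trajectory so that the integral defining $\theta^{\eps}$ is genuinely of finite variation; and (iii) checking, for stopping times $S\leq T$, that the pathwise nature of the Skorokhod map and of the functional $(\tilde X, L^{\pm})\mapsto \theta^{\eps}$ preserves adaptedness on $[S,T]$.
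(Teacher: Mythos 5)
Your proof is correct, but it takes a genuinely different route from the paper's. The paper first changes coordinates via $y(t):=K_w(t,x(t))$, so that the problem becomes confining the driftless process $K_w(t,X_t)$ near the FV path $y$, and then achieves the confinement with a \emph{singular-drift} SDE (the process $U^\delta$ with drift $\frac{1}{U^\delta+\delta}\chf_{[U^\delta\leq 0]}-\frac{1}{\delta-U^\delta}\chf_{[U^\delta>0]}$, which never exits $(-\delta,\delta)$ by Proposition 3.1 of \cite{rtr}); the resulting $\theta^{\eps}$ is $dy_t$ plus an absolutely continuous part. You instead confine $X^{\eps}$ directly in the tube $[x(t)-\eps/2,x(t)+\eps/2]$ via a doubly-reflected SDE and read $\theta^{\eps}$ off the local-time pushes $L^{\pm}$. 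Two remarks on what each approach buys. First, your invocation of the ``standard'' existence theory needs the two-sided Skorokhod map in its Lipschitz, pathwise form (Picard iteration composed with the reflection map), since your ``drift'' $-dx(t)$ is a general adapted continuous FV process rather than $b(t,\tilde X_t)\,dt$; this works, and the local Lipschitz property of $w$ suffices because the solution is confined to a pathwise compact tube, but it is not literally the textbook Lions--Sznitman statement. Second, and more substantively: your $\theta^{\eps}$ contains the singular measures $dL^{\pm}/w$, so it is continuous and of finite variation but \emph{not} absolutely continuous, whereas the paper's construction yields a $\theta^{\eps}$ that is absolutely continuous whenever $x$ is. The lemma as stated only asks for continuous FV, so your proof establishes it; but in the proof of Theorem \ref{mm:t:AC} the lemma is used to produce strategies that are then asserted to be absolutely continuous, and your version of the construction would not deliver that stronger property.
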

\begin{proof}
	Define $y(t):=K_{w}(t,x(t))$ and observe that $y$ is continuous and of finite variation. Moreover, introduce the stochastic process $U^{\delta}$ with $U^{\delta}_S=0$ and
	\[
	dU^{\delta}_t= dB_t + \left(\frac{1}{U^{\delta}_t+\delta}\chf_{[U^{\delta}_t\leq 0]}-\frac{1}{\delta-U^{\delta}_t}\chf_{[U_t^{\delta}>0]}\right)dt,
	\]	
	which stays in $(-\delta,\delta)$ in view of Proposition 3.1 in \cite{rtr}. 
	
	Next define $R^{\delta}:=U^{\delta}+y$ on $[S,T]$ and set $X^{\delta}_t=K^{-1}_{w}(t,R^{\delta}_t)$. Thus,
	\[
	dX^{\delta}_t= w(t,X^{\delta}_t) \left\{dB_t +d\theta^{\delta}_t\right\},
	\]
	where
	\[
	d\theta^{\delta}_t=\left(G(t,K_w^{-1}(t,R_t^{\delta}))+\frac{1}{U^{\delta}_t+\delta}\chf_{[U^{\delta}_t\leq 0]}-\frac{1}{\delta-U^{\delta}_t}\chf_{[U_t^{\delta}>0]}\right)dt+dy_t,
	\]
	where $G(t,x):=\int_0^{x}g(t,y)dy$.
	Therefore,
	\be \label{mm:eq:seps-xeps}
	\sup_{t \in [S,T]}|y(t)-K_w(t,X^{\delta}_t)|<\delta.
	\ee
	Choosing $\delta$ small enough we thus obtain
	\[
	\sup_{t \in [S,T]}|x(t)-X^{\delta}_t|=\sup_{t \in [S,T]}|K_w^{-1}(t,y(t))-X^{\delta}_t)|<\eps.
	\]
	due to the uniform continuity of $K^{-1}_w$ on compacts. 
\end{proof}
\begin{lemma}
	Let $g:[0,1]\times \bar{\bbR}\to \bar{\bbR}$ be continuous such that $g:[0,1]\times \bbR\to \bbR$ is also continuous. Suppose that $\lim_{u \rar \infty}g(t,u) \geq 0$ and $\lim_{u \rar -\infty}g(t,u) \geq 0$ for every $t \in [0,1]$. Then $g$ is bounded from below.
\end{lemma}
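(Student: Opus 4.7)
The plan is to recycle exactly the compactness strategy used in the proof of Lemma \ref{mm:l:gbbelow} and present it in isolation. Define, for each integer $n\geq 1$, the sublevel set
\[
A_n:=\{(t,u)\in [0,1]\times \bbR : g(t,u)\leq -n\}.
\]
Since $g$ is continuous as a function $[0,1]\times \bbR\to\bbR$, each $A_n$ is closed in $[0,1]\times \bbR$. The claim follows at once if we show that each $A_n$ is also bounded, for then each $A_n$ is compact, the family $(A_n)_{n\geq 1}$ is nested, and by the finite intersection property a point $(\hat t,\hat u)\in\bigcap_n A_n\subset [0,1]\times \bbR$ would satisfy $g(\hat t,\hat u)\leq -n$ for every $n$, contradicting the fact that $g$ takes finite real values on $[0,1]\times \bbR$. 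Hence the intersection is empty, meaning some $A_n$ is empty, i.e. $g\geq -n$ on the entire domain.

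To prove boundedness of $A_n$, suppose toward a contradiction that there is a sequence $(t_m,u_m)\in A_n$ with $|u_m|\to\infty$. Since $t_m\in[0,1]$ is compact, extract a subsequence (still indexed by $m$) with $t_m\to t_\infty\in[0,1]$ and either $u_m\to\infty$ or $u_m\to-\infty$. Then $(t_m,u_m)\to(t_\infty,\pm\infty)$ in $[0,1]\times\bar\bbR$, and joint continuity of $g$ as a map $[0,1]\times\bar\bbR\to\bar\bbR$ gives
\[
\lim_{m\to\infty} g(t_m,u_m)=g(t_\infty,\pm\infty)=\lim_{u\to\pm\infty} g(t_\infty,u)\geq 0
\]
by hypothesis. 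This contradicts $g(t_m,u_m)\leq -n<0$ for all $m$.

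The only delicate point, and the one I would be most careful about, is the passage to the limit in $g(t_m,u_m)$: it uses the stronger hypothesis that $g$ is continuous into the extended line on the extended domain, which encodes both the existence of the limits $\lim_{u\to\pm\infty} g(t_\infty,u)$ and their continuity in $t_\infty$. Once this is in hand, no further work is needed.
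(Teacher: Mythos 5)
Your proof is correct and follows essentially the same route as the paper's: closedness and boundedness of the sublevel sets $A_n$ via the extended continuity hypothesis, compactness, and the nested set property to force some $A_n$ to be empty. Your treatment is in fact slightly more careful than the paper's in extracting a convergent subsequence of $(t_m)$ before passing to the limit.
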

\begin{proof} Consider sets $(E_n)_{n \geq 1}$
	\[
	E_n:=\{(t,u)\in [0,1]\times \bbR:g(t,u)\leq -n \}.
	\]
	Clearly, $A_n$s are closed. They are also bounded. Indeed, if there exists a sequence $(t_m,u_m)\in A_n$ such that $u_m \rar \infty$ or $u_m \rar -\infty$. Then, $\lim_{m\rar \infty}g(t_m,u_m) \leq -n$, contradicting the hypotheses on joint continuity and the limits at $\pm \infty$. Thus, $A_n$s are compact. Therefore, if all $A_n$s are non-empty, then $\cap_{n \geq 1} A_n \neq \emptyset$ by the nested set property (see Corollary to Theorem 2.36 in \cite{RudinMA}). By construction $g(t,u)=-\infty$ for any  $(t,u)$ in this intersection, which contradicts our continuity assumption on $g$, Therefore, $A_n$ must be empty for all $n>N$ for some $N$.
\end{proof}
\begin{proof}[Proof of Theorem \ref{t:thetacts}]
	First observe that $H^*$ can be taken equal to identity in view of Remark \ref{mm:r:noH}. Let $\nu:=\inf\{t\geq 0: \Delta X_t>0\}$ and suppose $P^{0,z}(\nu<1)>0$.  We will construct a strategy $\theta^{\eps}$ that agrees with $\theta^*$ on $[0,\nu)$.  Let $\eps>0$ and choose $\eps(\nu)=\eps \wedge \frac{1-\nu}{3}$. 
	
	On $[\nu, \nu+\eps(\nu)]$ set $X_{\nu}^{\eps}=X^*_{\nu-}, \, d\theta^{\eps}=-dB_t+\frac{f(z)-X^{\eps}_{\nu}}{\eps(\nu) w(t,X^{\eps}_t)}dt$ and note that $|X^{\eps}|\leq |X^*_{\nu-}|+|f(z)|$  on $[\nu,\nu+\eps(\nu)]$ as well as $X^{\eps}_{\nu+\eps(\nu)}=f(z)$.
	
	Now consider the interval $[\nu+\eps(\nu),\nu+2 \eps(\nu)]$ and introduce
	\[
	dR_t^{\eps}=dB_t+ \frac{R^*_t-R^{\eps}_t}{\nu+2\eps(\nu)- t}dt,
	\]
	with $R^{\eps}_{\nu+\eps(\nu)}=K_w(\nu+\eps(\nu),f(z))$. It is easy to see that the solution to the above SDE on $[\nu+\eps(\nu),\nu+2\eps(\nu))$ is given by
	\[
	R_t^{\eps}=(\nu+2\eps(\nu)-t)\left[R^{\eps}_{\nu+\eps(\nu)}+\int_{\nu+\eps(\nu)}^t\frac{1}{\nu+2\eps(\nu)-s}dB_s +\int_{\nu+\eps(\nu)}^t\frac{R^*_s}{(\nu+2\eps(\nu)-s)^2}ds\right],
	\]
	by $(\nu+2\eps(\nu)-t)$ converge two $0$ as $t \rar \nu+2\eps(\nu)$ in view of Exercise IX.2.12 in \cite{RY}. Moreover, on $[R^*_{\nu+2\eps(\nu)}\neq 0]$ an application of L'Hospital's rule shows that the third term multiplied by $(\nu+2\eps(\nu)-t)$ converges to $R^*_{(\nu +2\eps(\nu))-}$. Similarly, on $[R^*_{(\nu+2\eps(\nu))-}=0]$, $(\nu+2\eps(\nu)-t)\int_{\nu+\eps(\nu)}^t\frac{|R^*_s|}{(\nu+2\eps(\nu)-s)^2}ds\rar 0$. Therefore, $R^{\eps}_{\nu+2 \eps(\nu)}=R^*_{(\nu+2\eps(\nu))-},$ a.s.. Note that if we define 
	\[
	\tau_R:=\inf\{t\geq \nu+\eps(\nu): \sgn(R^*_t-R^{\eps}_t)\neq \sgn(R^*_{\nu+\eps(\nu)}-R^{\eps}_{\nu+\eps(\nu)})\}\wedge \inf\{t\geq \nu+\eps(\nu): R^*_t=R^{\eps}_t\},
	\]
	where $\sgn(x)=\chf_{x>0}-\chf_{x\leq 0}$, then $R^{\eps}$ is a semimartingale on $[\nu+\eps(\nu),\tau_R]$. Indeed,
	\[
	\int_{\nu+\eps(\nu)}^{\tau_R}\frac{|R^*_t-R^{\eps}_t|}{(\nu+2\eps(\nu)-t)}dt=\left|\int_{\nu+\eps(\nu)}^{\tau_R}\frac{R^*_t-R^{\eps}_t}{(\nu+2\eps(\nu)-t)}dt\right|=\left|B_{\tau_R}-R^{\eps}_{\tau_R}-B_{\nu+\eps(\nu)}+R^{\eps}_{\nu+\eps(\nu)}\right|<\infty.
	\]
	
	Next we define $\tilde{X_t}= K_w^{-1}(t,R^{\eps}_t)$ for $t \in [\nu+\eps(\nu),\nu+2\eps(\nu))$  and  set 
	\[
	X^{\eps}_t=\left\{\ba{ll}
	f(z)+(\tilde{X_t}-f(z))^+, & \mbox{ if } X^*_{\nu+\eps(\nu)}\geq f(z); \\
	f(z)-(\tilde{X_t}-f(z))^-, & \mbox{ if } X^*_{\nu+\eps(\nu)}< f(z),
	\ea \right . \quad \forall t \in [\nu+\eps(\nu), \tau),
	\] 
	where 
	\[
	\tau=\inf\{t\geq \nu+\eps(\nu): X^*_t= f(z)\}\wedge \inf\{t\geq \nu+\eps(\nu): \sgn(X^*_t- f(z))\neq \sgn(X^*_{\nu+\eps(\nu)}- f(z))\}\wedge\tau_R.
	\]
	
	$X^{\eps}$ on $[\nu+\eps(\nu),\tau)$ satisfies
	\[
	dX^{\eps}_t=w(t,X^{\eps}_t)(dB_t +d\theta^{\eps}_t),
	\]
	where in case $X^*_{\nu+\eps(\nu)}\geq f(z)$
	\[
	d\theta^{\eps}_t=\chf_{[X^{\eps}_t >f(z)]}\left( G(t,X_t^{\eps})+\frac{R^*_t-R^{\eps}_t}{\nu+2\eps(\nu)-t}\right)dt +\frac{1}{2}d\tilde{L}_t -\chf_{[X^{\eps}_t=f(z)]}dB_t,
	\]
	and $\tilde{L}$ is the local time of $\tilde{X}$ at $f(z)$ in view of Theorem 68 in Chap. IV of \cite{Pro}. Similarly, if $X^*_{\nu+\eps(\nu)}< f(z)$,
	\[
	d\theta^{\eps}_t=\chf_{[X^{\eps}_t <f(z)]}\left( G(t,X_t^{\eps})+\frac{R^*_t-R^{\eps}_t}{\nu+2\eps(\nu)-t}\right)dt -\frac{1}{2}d\tilde{L}_t -\chf_{[X^{\eps}_t=f(z)]}dB_t.
	\]
	Next pick an $n\geq 1$ and consider $\hat{\theta}$, which is given by
	\[
	\hat{\theta}_t=\chf_{[t<\nu]}\theta^*_t+ \chf_{[t\geq \nu]}\left(\theta^*_t\chf_{[X^*_{\nu-}>n]}+\theta^{\eps}_t\chf_{[X^*_{\nu-}\leq n]}\right).
	\]
	This strategy is clearly admissible and  will outperform $\theta^*$ for small enough $\eps$ and large enough $n$ by following the reasoning and calculations that led to the analogous conclusion in Theorem \ref{t:gzero}. 
\end{proof}
\bibliographystyle{siam}
\bibliography{ref}
\end{document}